\newtheorem{lemma}{Lemma}
\newtheorem{theorem}{Theorem}
\newtheorem{remark}{Remark}
\newcommand{\newreptheorem}[2]{\newtheorem*{rep@#1}{\rep@title}
\newenvironment{rep#1}[1]{\def\rep@title{#2 \ref*{##1}}\begin{rep@#1}}{\end{rep@#1}}
}
\DeclareMathOperator*{\argmin}{arg\,min}
\newcommand{\RR}[0]{\mathbb{R}}
\newcommand{\ip}[2]{\left\langle #1, #2 \right\rangle}
\def\l({\left(}
\def\r){\right)}
\def\bl({\Big(}
\def\br){\Big)}
\def\beq{\begin{equation}}
\def\eeq{\end{equation}}
\def\x{{\mathbf x}}
\def\y{{\mathbf y}}
\def\g{{\mathbf g}}
\def\z{{\mathbf z}}
\def\RR{{\mathbb R}}
\def\w{{\mathbf w}}
\def\wt{{w_{t}}}
\def\wt1{{w_{t+1}}}
\def\at1{{a_{t+1}}}
\def\Dcal{{ \mathcal{D} }}
\def\nn{{ \nonumber }}
\icmltitlerunning{Let's be Honest: An Optimal No-Regret Framework for Zero-Sum Games}
\begin{document}

\twocolumn[
\icmltitle{Let's be Honest: An Optimal No-Regret Framework for Zero-Sum Games}

% It is OKAY to include author information, even for blind
% submissions: the style file will automatically remove it for you
% unless you've provided the [accepted] option to the icml2018
% package.

% List of affiliations: The first argument should be a (short)
% identifier you will use later to specify author affiliations
% Academic affiliations should list Department, University, City, Region, Country
% Industry affiliations should list Company, City, Region, Country

% You can specify symbols, otherwise they are numbered in order.
% Ideally, you should not use this facility. Affiliations will be numbered
% in order of appearance and this is the preferred way.
\icmlsetsymbol{equal}{*}

\begin{icmlauthorlist}
\icmlauthor{Ehsan Asadi Kangarshahi}{equal,epfl}
\icmlauthor{Ya-Ping Hsieh}{equal,epfl}
\icmlauthor{Mehmet Fatih Sahin}{epfl}
\icmlauthor{Volkan Cevher}{epfl}
\end{icmlauthorlist}

%\icmlaffiliation{cam}{University of Cambridge, Cambridge, England, United Kingdom}
\icmlaffiliation{epfl}{LIONS, EPFL, Switzerland}
%\icmlaffiliation{ed}{School of Computation, University of Edenborrow, Edenborrow, United Kingdom}

\icmlcorrespondingauthor{Ya-Ping Hsieh}{ya-ping.hsieh@epfl.ch}
\icmlcorrespondingauthor{Volkan Cevher}{volkan.cevher@epfl.ch}
%\icmlcorrespondingauthor{Eee Pppp}{ep@eden.co.uk}

% You may provide any keywords that you
% find helpful for describing your paper; these are used to populate
% the "keywords" metadata in the PDF but will not be shown in the document
\icmlkeywords{Zero-Sum Games, No-Regret Algorithms}

\vskip 0.3in
]

%\externaldocument{ICML2018-gameAppendix}
% this must go after the closing bracket ] following \twocolumn[ ...

% This command actually creates the footnote in the first column
% listing the affiliations and the copyright notice.
% The command takes one argument, which is text to display at the start of the footnote.
% The \icmlEqualContribution command is standard text for equal contribution.
% Remove it (just {}) if you do not need this facility.

%\printAffiliationsAndNotice{}  % leave blank if no need to mention equal contribution
\printAffiliationsAndNotice{\icmlEqualContribution} % otherwise use the standard text.

\begin{abstract}
We revisit the problem of solving two-player zero-sum games in the decentralized setting. We propose a simple algorithmic framework that simultaneously achieves the best rates for honest regret as well as adversarial regret, and in addition resolves the open problem of removing the logarithmic terms in convergence to the value of the game. We achieve this goal in three steps. First, we provide a novel analysis of the optimistic mirror descent (OMD), showing that it can be modified to guarantee fast convergence for both honest regret and value of the game, when the players are playing collaboratively. Second, we propose a new algorithm, dubbed as robust optimistic mirror descent (ROMD), which attains optimal adversarial regret without knowing the time horizon beforehand. Finally, we propose a simple signaling scheme, which enables us to bridge OMD and ROMD to achieve the best of both worlds. Numerical examples are presented to support our theoretical claims and show that our non-adaptive ROMD algorithm can be competitive to OMD with adaptive step-size selection. 
%
%We present a novel algorithm for online convex optimization, which can be viewed as a robust version of Optimistic Mirror Descent, that achieves the optimal $O(\sqrt{T})$ regret without knowing the time horizon. We further design a simple signaling scheme that bridges our algorithm and Optimistic Mirror Descent into solving the Nash Equilibrium of zero-sum games. Our bounds simultaneously achieve the best rates for honest regret, adversarial regret, and we resolve the open problem of removing the $\log T$ term in convergence to Nash Equilibrium. Simulation results confirm the efficacy of our algorithms.
\end{abstract}

% !TEX root = 00_ICML2018-game.tex
\section{Introduction}
The simple zero-sum games have been studied extensively, often from the standpoint of analyzing the convergence to the Nash equilibrium. At the equilibrium, the players employ a min-max pair of strategies where no player can improve their pay-off by a unilateral deviation \citep{von1928theory}. 

In this setting, one can expect that the players arrive at the equilibrium via decentralized, no-regret learning algorithms, which hold even in the presence of potential adversarial behavior, and which also better model selfish play. The resulting dynamics is of great interest in optimization and behavioral economics \citep{myerson1999nash}, especially under communication constraints. 

When the behavior of each player is explained by a no-regret algorithm, it is possible to significantly improve convergence rates beyond the so-called black-box, adversarial dynamics. This observation was first made by \citep{daskalakis2011near}, which tailored a decentralized version of Nesterov's primal-dual method based on the excessive gap condition. 

Intriguingly, \citep{daskalakis2011near} left it as an open question on the existence of a simple algorithm that converges at optimal rates for both regret and the value of the game in an uncoupled manner, both against honest (i.e., cooperative) and dishonest (i.e., arbitrarily adversarial) behavior. 

The challenge was partially settled by the modified optimistic mirror descent (OMD) framework in \citep{rakhlin2013optimization}. While the framework of \citep{daskalakis2011near} is considered unnatural and involves additional logarithmic factors, similar arguments apply to \citet{rakhlin2013optimization}'s framework: The modified OMD needs to know the game horizon a priori to determine the step-sizes. Their analysis also results in non-optimal regret and logarithmic factors in convergence to the value of the game. 

Besides the aforementioned drawbacks, neither approaches can accommodate natural switches between honest and dishonest behavior. 

In this work, we propose a simple algorithmic framework that closes the gap between upper and lower bounds for adversarial regret as well as convergence to the value of the game, while maintaining the best known rate for honest regret, thereby resolving the open problem posed by \citep{daskalakis2011near}.

%achieves the best rates for honest regret as well as adversarial regret, thereby resolving the open problem of removing the logarithmic terms in the game horizon from the convergence to the value of the game. 

We achieve the desiderata as follows: First, we provide a novel analysis of OMD and show that it can obtain fast convergence for both honest regret and value of the game, when both players are honest. Second, we introduce robust optimistic mirror descent (ROMD), which attains optimal adversarial regret without knowing the time horizon. Finally, we propose a simple signaling scheme, which enables us to bridge OMD and ROMD to achieve the best of both worlds, and seamlessly handle honest and dishonest behavior.

\begin{table*}[!h]
\centering
\begin{tabular}{|l||*{5}{c|}}\hline
%\backslashbox{Mask}{Decoder}
 &\makebox[6em]{\small\textbf{Honest $R_T$}}&\makebox[6em]{\small\textbf{Adversarial $R_T$}}&\makebox[6em]{\small\textbf{Game Value}} &\makebox[6em]{\small\textbf{Oracle}}&\makebox[6em]{\textbf{\small Algorithm}}  \\\hline \hline
\small\citet{daskalakis2011near} & {\small$O(\log T)$} &  \small $O(\sqrt{T})$ &  \small$O(T^{-1}{\log^{\frac{3}{2}} T})$  &  \small$|A|_{\max}$ &Complicated\\\hline
\small\citet{rakhlin2013optimization} &  $?$  &   \small$O(\sqrt{T}\log T)$ & \small$O\l(T^{-1}{\log T}\r)$ & \small$T, |A|_{\max}$ &Simple \\\hline %30.32
\small This paper &  \small$O(\log T)$   &  \small$O(\sqrt{T})$ & \small$O\l(T^{-1}\r)$ & \small$|A|_{\max}$ &Simple \\\hline  %32.41
\end{tabular}\caption{\label{tab:convergence_comparison} A convergence rate comparison in the context of assumptions. }
\end{table*}

% !TEX root = ICML2018-game.tex
\subsection{Related Work}
\noindent\textbf{Algorithms for Decentralized Games:} To our knowledge, the only two explicit algorithms capable of solving zero-sum games in the decentralized setting are given by \citep{daskalakis2011near} and \citep{rakhlin2013optimization}, respectively. A comparison of their convergence rates versus ours is presented in Table 1.

The algorithm of \citep{daskalakis2011near} is a decentralized primal-dual method based on Nesterov's excessive gap technique \citep{nesterov2005excessive}. Its convergence guarantees are only slightly worse than ours (\textit{cf.,} Table \ref{tab:convergence_comparison}). However, due to the presence of complicated and unnatural scheduling steps, the authors in \citep{daskalakis2011near} themselves were not convinced by the practicality of their algorithm and stated the result as merely an ``existence proof.''

Later on, \citet{rakhlin2013optimization} proposed an algorithm based on the Optimistic Mirror Descent (OMD), initially introduced in a special case by \citep{chiang2012online} and also studied in detail by \citep{rakhlin2013online}. While the algorithm is simple, it features  several drawbacks. Foremost, it requires the time horizon beforehand, which is unsatisfactory. Second, when both players are playing collaboratively, their regret is sub-optimal. Third, its adversarial regret and convergence to the game value has extra $\log T$ factors, which require additional cautions to remove. Finally, the algorithm uses \emph{adaptive} step-sizes, requiring additional work per-iteration.

\noindent\textbf{Meta-Algorithms:} There exist some work on ``meta-algorithms'' for games \citep{syrgkanis2015fast, foster2016learning}, which can turn certain learning algorithms into solving zero-sum games. For instance, leveraging the framework in \citep{syrgkanis2015fast}, one can modify OMD to achieve $O(T^{\frac{1}{4}})$ for honest regret + $\tilde{O}\l( \sqrt{T}\r)$ for adversarial regret. Our algorithm uniformly outperforms these rates.%$\tilde{O}\l(T^{-1}\r)$ for value of the game + $\tilde{O}\l( T^{- \frac{1}{2} }\r)$ for adversarial regret, or $O(T^{-\frac{3}{4}})$ for honest regret + $\tilde{O}\l( T^{- \frac{1}{2} }\r)$ for adversarial regret. Our algorithm uniformly outperforms these rates.

%\onecolumn

% !TEX root = 00_ICML2018-game.tex
\section{Preliminaries and Notation}
Let $\psi$ be a mirror map over the convex domain $\Dcal$, and let $D(\cdot, \cdot)$ be the Bregman divergence associated with $\psi$. We assume the knowledge of the three-point identity for Bregman divergence in the sequel:
\beq \nn
D(\x,\y) + D(\y,\z) = D(\x,\z) + \ip{\x-\y}{  \nabla \psi (\z) - \nabla \psi (\y) }.
\eeq

We use the notation $\z = MD_\eta(\x,\g)$ to denote:
	\begin{align*}	
	\z= \nabla \psi^\star \Big( \nabla\psi(\x) - \eta \g \Big)
	%&x_{t+1} = arg \min_{\x} D_R(\x,\z)	
\end{align*}where $\psi^\star$ is the Fenchel dual of $\psi$. 

Let $\psi$ be 1-strongly convex with respect to the norm $\| \cdot \|$. We define 
\beq \nn
D^2 \coloneqq \max \left\{  \sup_{\x, \x' \in \Dcal} \frac{1}{2}\|\x-\x'\|^2, \sup_{\x \in \Dcal}D(\x, \x_c) \right\}
\eeq
where $\x_c \coloneqq \argmin_{\x\in \Dcal} \psi(\x)$ is the prox center. Hence $D$ controls both the diameter (in $\|\cdot\|$) and the Bregman divergence to the prox center.

We frequently use the fact that
\beq \nn
\ip{\x}{A \y} \leq | A |_{\max} \quad  \forall \x \in \Delta_m, \ \y \in \Delta_n
\eeq
where $| A |_{\max}$ is the maximum entry of $A$  in absolute value, and $\Delta_m \coloneqq \{ \x \in \RR^m \ | \ \sum_{i=1}^m x_i = 1, x_i \geq 0  \}$ is the standard simplex. On a simplex, we will only consider the entropic mirror map:
\beq \nn
\psi(\x) = \sum_{i=1}^k x_i \log x_i, \quad k = m \text{ or } n
\eeq
which is well-known to be 1-strongly convex in $\| \cdot \|_1$.

We use $\frac{1}{m}1_m$ to denote the uniform distribution on $\Delta_m$.

%Let $f_1, f_2, ...$ be an arbitrary sequence of convex functions. We abbreviate $\nt{t} \coloneqq \nabla f_t$.

\section{Problem Formulation and Main Result}
An (offline) two-player zero-sum game with payoff matrix $A$ refers to the solving the minimax problem:
\beq \label{eq:two_player_zero_sum_game}
V \coloneqq \min_{\y \in \Delta_n} \max_{\x \in \Delta_m} \ip{\x}{A\y}.
\eeq
The quantity $ V$ in \eqref{eq:two_player_zero_sum_game} is called the \textbf{value} of the game, or the Nash Equilibrium Value. Any pair $(\bar{\x}, \bar{\y})$ attaining the game value is called an equilibrium strategy.

In the decentralized setting (aka., the ``strongly uncoupled'' setting),  the payoff matrix and the number of opponent's strategies are unknown to both players, and their goal is to learn a pair of equilibrium strategy through repeated game plays. Moreover, each player aims to suffer a low individual regret, even in the presence of an adversary or  a corrupted channel that distorts the feedback.

Specifically, at each round $t$, the players take actions $\x_t$ and $\y_t$, and then receive the loss vectors $-A\y_t$ (for $\x$-player) and $A^\top \x_t$ (for $\y$-player). In the honest setting, we assume that the two players take actions according to a prescribed algorithm, and we say the setting is adversarial if only one player (the $\x$-player in this paper) adheres to the prescribed algorithm and the other player arbitrary.

As in previous work, we assume that an upper bound $|A|_{\textup{max}}$ on the maximum absolute entry of $A$ is available to both players. The goal is to achieve
\begin{align*}
\left| V-   \ip{\x_T }{A\y_T} \right| &\leq r_1(T), \\
R_T \coloneqq \max_{\x \in \Delta_m} \sum_{t=1}^T \ip{\x_t - \x}{ -A\y_t} & \leq r_2(T)
\end{align*}for fast-decaying $r_1$ and sublinear $r_2$ in $T$. The first requirement is to approximate the game value in \eqref{eq:two_player_zero_sum_game}, and the second one asks to minimize the regret $R_T$.

Our main result can be stated as follows:
\begin{theorem}[Main result, informal]\label{thm:informal}
For \eqref{eq:two_player_zero_sum_game}, there is a simple decentralized algorithm with non-adaptive step-size such that
\begin{align*}
r_1(T) = O\l(\frac{1}{T}\r), \quad \quad r_2(T) = O\l(\log T\r),
\end{align*}if the opponent is honest (i.e., playing collaboratively to solve the game). Moreover, against any adversary, we have
\beq
r_2(T) = O\l(\sqrt{T}\r). \nn
\eeq
\end{theorem}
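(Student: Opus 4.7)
The plan is a three-stage construction that mirrors the decomposition in the introduction: (i) a sharper analysis of OMD in the cooperative regime; (ii) a horizon-free variant, ROMD, that retains optimal adversarial regret; (iii) a lightweight signaling scheme that glues the two. Each stage corresponds to one of the three claims in the theorem, and I would prove them in order.

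For stage (i), both players run OMD with a constant step-size $\eta$ depending only on $|A|_{\max}$ and with gradient prediction $\tilde{\g}_{t+1}=\g_t$. The standard OMD per-round lemma gives
\[
\eta\ip{\x_t-\x}{\g_t}\leq D(\x,\z_{t-1})-D(\x,\z_t)+\eta^2\|\g_t-\g_{t-1}\|_\ast^2-D(\x_t,\z_{t-1}),
\]
where $\z_t$ denotes the OMD state. When the opponent also runs OMD, $\|\g_t-\g_{t-1}\|_\ast\leq|A|_{\max}\|\y_t-\y_{t-1}\|_1$, and by 1-strong convexity of the entropic map this quadratic surprise is absorbed by the negative Bregman term appearing in the opponent's copy of the inequality. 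Summing the two inequalities produces a telescoping cancellation, leaving only the initial Bregman divergence to the entropic prox center, which is $O(\log k)$ on $\Delta_k$; hence honest $R_T=O(\log T)$. For the value, rather than bounding the regrets and paying an extra $1/T$ for averaging, I would telescope the quantity $\ip{\x_t}{A\y_t}-V$ directly using the explicit OMD update. This is where the novelty lies: the update-level identity removes the stray $\log T$ factor and yields $O(1/T)$ convergence to $V$, resolving the open problem.

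For stage (ii), when the opponent is adversarial the cancellation above fails and one recovers the usual $O(1/\eta+\eta T)$ bound, which requires knowing $T$ to tune $\eta$. ROMD instead uses a non-adaptive time-varying step-size $\eta_t$ (for instance $\eta_t=c/\sqrt{t}$ with $c$ depending only on $|A|_{\max}$). The per-round lemma still applies, and the negative Bregman term $-D(\x_t,\z_{t-1})/\eta_t$ is strong enough to absorb the step-size mismatch between consecutive rounds, so summation yields $R_T=O(|A|_{\max}D\sqrt{T})$ uniformly in $T$ without advance knowledge of the horizon.

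For stage (iii), the two algorithms are glued by a signaling scheme. The $\x$-player starts in OMD mode and performs a simple round-by-round consistency check on the observed gradient $A\y_t$, testing whether it is compatible with an honest OMD step by the opponent. As long as the check passes, both players are certifiably honest and the stage (i) analysis delivers the honest rates. The moment the check fails the $\x$-player switches permanently to ROMD; since ROMD is horizon-free, it can be restarted at the detection time $T_0$ and contributes $O(\sqrt{T-T_0})$ additional regret, which combined with the $O(\log T_0)$ regret accumulated before the switch yields $O(\sqrt{T})$ adversarially. I expect the hard part to be the design of this consistency test: it must be computable from each player's local feedback, must pass whenever the opponent truly runs the prescribed honest OMD, and must fail quickly enough after any deviation that the pre-switch regret stays polylogarithmic. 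This local detectability property, rather than the analyses of OMD or ROMD in isolation, is the real technical core of the result.
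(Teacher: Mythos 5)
Your three-stage skeleton matches the paper's plan, but each stage is missing the idea that actually makes it work. In stage (i), the telescoping cancellation you describe bounds the \emph{sum} $\textup{R}_T^\x + \textup{R}_T^\y$ of the two players' auxiliary regrets by a constant $O(\log m + \log n)$; this controls the duality gap (and hence gives the $O(1/T)$ value convergence once you divide by $T$ --- no special update-level identity is needed), but it says nothing about either player's \emph{individual} regret, since one regret can be large and positive while the other is negative. The raw OMD iterates $\x_t$ are in fact not guaranteed sublinear individual regret. The paper's fix, and its central observation for the honest case, is that the players should \emph{play the running averages} $\z_t = \frac{1}{t-1}\sum_{i=3}^{t+1}\x_i$: the instantaneous regret of $\z_t$ equals a duality gap at time $t$ and is therefore bounded by $C_1/t$, and summing $C_1/t$ gives the $O(\log T)$ honest regret. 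Your proposal never averages, so your claimed honest $R_T = O(\log T)$ does not follow from your argument. In stage (ii), for the entropic mirror map the divergence $D(\x^*,\x_t)$ is not uniformly bounded (iterates can approach the boundary of the simplex), so the time-varying step-size telescoping $\sum_t (\eta_t^{-1}-\eta_{t-1}^{-1})D(\x^*,\x_t)$ cannot simply be "absorbed"; the paper's ROMD inserts a mixing step $\tilde{\x}_t = \frac{t-1}{t}\x_t + \frac{1}{t}\x_c$ toward the prox center precisely to make this telescoping go through.

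Stage (iii) is where the largest gap lies, and you correctly flag it as the hard part but do not supply the mechanism. First, the paper's test is not "is the opponent's play consistent with OMD" (uncheckable, since the opponent's state and dimension $n$ are unknown) but "does the instantaneous regret exceed $b/t$," where $b$ is a running estimate of the constant $C_1$, which depends on $n$ and hence cannot be computed locally. Because $b$ can underestimate $C_1$, the test can fail even against an honest opponent, so a \emph{permanent} switch to ROMD would forfeit the honest rates; the paper instead plays a geometrically growing block of ROMD rounds, doubles $b$, and returns to OMD, and the adversarial bound $2^{T_3}\log T_1 \leq C\sqrt{T_1+T_2}$ only holds because the ROMD block lengths grow like $16^{T_3}$. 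Second, the two players test different quantities ($G_t^\z$ versus $G_t^\w$), which are generically unequal, so one player can trip its test while the other does not --- and the honest OMD analysis collapses unless both players run OMD in lockstep. The paper resolves this with an explicit signal: the detecting player plays the best response $\z_t^*$ (or $\w_t^*$) for one round, which the opponent provably recognizes (Lemma \ref{lem:switching_together}) and which forces both players into ROMD together. Your scheme has each player deciding independently with no signal, so you have no way to keep the honest OMD plays synchronized.
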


Except for the $O\l({\log T}\r)$ honest regret, these rates are known to be optimal \citep{cesa2006prediction,daskalakis2015near}. We are also the first to remove $\log T$ factors in convergence to the value of the game, an open question posed by the very first work in learning decentralized games \citep{daskalakis2011near}.

%It is known that $r_1(T) = \Omega\l(\frac{1}{T} \r)$ and $r_2(T) = \Omega\l(\frac{1}{\sqrt{T}} \r)$. 

% !TEX root = 00_ICML2018-game.tex
\section{A family of optimistic mirror descents: Classical, Robust, and Let's be honest}
We first illustrate the high-level ideas to prove \textbf{Theorem \ref{thm:informal}} in Section \ref{subsec:ideas}. A novel analysis for OMD in the honest setting is given in Section \ref{subsec:omd}, and we propose a new algorithm for the adversarial setting in Section \ref{subsec:romd}. Finally, the full algorithm is presented in Section \ref{subsec:full}, along with the rigorous version of the main result (\textit{cf.,} \textbf{Theorem \ref{thm:ADMD}}).

\subsection{High-Level Ideas}\label{subsec:ideas}
Our algorithms are inspired by the iterates of the form:
\beq \label{eq:omd_equiv}
\left\{
\begin{array}{ll}
\x_{t+1} = MD_{\eta}(\x_t, -2A\y_t + A\y_{t-1})\\
\y_{t+1} = MD_{\eta}(\y_t, 2A^\top \x_t - A^\top\x_{t-1})
\end{array},
\right.
\eeq
which are equivalent to the OMD in \citep{rakhlin2013optimization} (see Appendix \ref{app:quiv}). It is known that directly applying \eqref{eq:omd_equiv} to \eqref{eq:two_player_zero_sum_game} yields $O\l(\frac{1}{T}\r)$ convergence in the game value, however without any guarantee on the regret.

To make OMD optimal for zero-sum games, we improve \eqref{eq:omd_equiv} on two fronts. First, in the honest setting, we make the following simple observation: Although the iterates $\x_{t}$ are not guaranteed to possess sublinear regret, the averaged iterates $\frac{1}{t}\sum_{i=1}^t \x_i$ do enjoy logarithmic regret, and hence, it suffices to play the averaged iterates in the honest setting. 

Second, in order to make OMD robust against any adversary, we utilize the ``mixing steps'' of \citep{rakhlin2013optimization} with an important improvement: Our step-sizes do not depend on the time horizon. This new feature is crucial in removing $\log T$ factors in both the convergence to game value and adversarial regret. In fact, our analysis is arguably simpler than \citep{rakhlin2013optimization}.

%There are two challenges to applying \eqref{eq:omd_equiv} to solving \eqref{eq:two_player_zero_sum_game}. First, although , it does not possess any guarantee on the regret. Second, 
%
%but without any guarantee on the regret. In order to resolve this issue, \citep{rakhlin2013optimization} utilized a ``mixing'' step at each iteration, which amounts to taking convex combinations of the current iterate and the uniform distribution. Along with an adaptive step-size strategy, they proved that the mixing steps provide robustness against adversarial plays while sacrificing a $\log T$ factor in convergence to game value.
%
%As alluded to before, the drawback of the mixing steps is 

% !TEX root = 00_ICML2018-game.tex
\subsection{Optimistic Mirror Descent}\label{subsec:omd}
\begin{algorithm} [h]
		\caption{Optimistic Mirror Descent: $\x$-Player}
		Set $\eta= \frac{1}{2|A|_{\max}}$\\
		Play $\z_1 = \z_2 = \z_3 = \frac{1}{m}1_m $\\
		For $t \geq 3$:
		\begin{algorithmic}[1]
			\STATE Compute \begin{align*} \x_{t+1} = MD_\eta &(\x_t,-2(t-2)A\w_t  \\&+3(t-3)A\w_{t-1} - (t-4)A\w_{t-2} ) 
			\end{align*}
			\STATE Play $\z_{t+1} = \frac{1}{t-1}\sum_{i = 3}^{t+1}\x_i$
			\STATE Observe $-A\w_{t+1}$
		\end{algorithmic}\label{alg:HDMDx}
	\end{algorithm}
	
	\begin{algorithm}[h]
	\caption{Optimistic Mirror Descent: $\y$-Player}
	Set $\eta= \frac{1}{2|A|_{\max}}$\\
	Play $\w_1 = \w_2 = \w_3 = \frac{1}{n}1_n$ \\
	For $t \geq 3$:
		\begin{algorithmic}[1]
			\STATE	Compute \begin{align*} \y_{t+1} = MD_\eta &(\y_t,2(t-2)A^\top\z_t \\ &-3(t-3)A^\top\z_{t-1} + (t-4)A^\top\z_{t-2}) 
			\end{align*}
			\STATE	Play $\w_{t+1} = \frac{1}{t-1}{\sum_{i = 3}^{t+1}\y_i}$
			\STATE Observe $A^\top \z_{t+1}$
		\end{algorithmic}\label{alg:HDMDy}
	\end{algorithm}
	
%In this subsection, we present a slight variant of the OMD algorithm in \citep{rakhlin2013optimization}. Our variant has the advantage that the players suffer a $O\l({\log T}\r)$ regret if they play honestly, whereas the version in \citep{rakhlin2013optimization} only has $O\l( \sqrt{T}\log T\r)$ regret guarantees. Moreover, our guarantees are always on the played strategies; in contrast, the regret guarantee in \citep{rakhlin2013optimization} is on the played strategies but the approximate equilibrium guarantee is on the empirical average of the played strategies.

As alluded to in Section \ref{subsec:ideas}, we will play OMD with the averaged iterates. The algorithms are given explicitly in \textbf{Algorithm \ref{alg:HDMDx}} and \textbf{\ref{alg:HDMDy}}. %They are called \textbf{Honest Differential Mirror Descent} (HDMD).	
	
\begin{remark}
Note that there is no need to play $\frac{1}{m}1_m$ and $\frac{1}{n}1_n$ three times in  \textbf{Algorithm \ref{alg:HDMDx}} and \textbf{\ref{alg:HDMDy}}. The players could just play once $\l(\frac{1}{m}1_m\r)^\top A \l(\frac{1}{n}1_n\r)$ and would have enough information to run OMD from $\x_4$ and $\y_4$. Our choices are motivated by the resulting ease of the notation. 
\end{remark}

We analyze our version of OMD below. The crux of our analysis is to first look at the regrets of auxiliary sequences $\x_t$ and $\y_t$, and we show that the \emph{sum} of the auxiliary regrets, not any individual of them, controls both the convergence to the value of the game and the honest regret for the averaged sequences $\z_t$ and $\w_t$.
%We analyze the algorithm below. We use the techniques of bounding regret to prove the convergence to the Nash Equilibrium Value. Reciprocally, we use the convergence to the Nash Equilibrium to show that each player suffers a low individual regret. This kind of analysis comes from the natural coupling between the individual regrets and Nash Equilibrium Value in the honest setting. 
\begin{theorem}\label{thm:HDMD}
Suppose two players of a zero-sum game have played $T$ rounds according to the OMD algorithm with $\eta = \frac{1}{2|A|_{\max}}$. Then
\begin{enumerate}
\item The $\x$-player suffers an $O\l({\log  T}\r)$ regret:
\begin{align} 
\max_{\z\in\Delta_m}\sum_{t=3}^T \ip{\z_t - \z}{ -A\w_t} &\leq    \log 2(T-2)|A|_{\max} \times \nn\\
&\hspace{25pt} \Big(20 + \log m +\log n \Big) \label{eq:HDMD_low_regret}\\
&= O\l( {\log T}\r) \nn
\end{align}
and similarly for the $\y$-player.
\item The strategies $(\z_T,\w_T)$ constitutes an $O\l(\frac{1}{T} \r)$-approximate equilibrium to the value of the game:
\begin{align} 
\left| V -   \ip{\z_T}{ A \w_T} \right| &\leq \frac{ \Big(20 + \log m +\log n \Big)|A|_{\max}} {T-2}\label{eq:HDMD_value_of_the_game}\\
& = O\l(  \frac{1}{T}\r). \nn
\end{align}
\end{enumerate}
\end{theorem}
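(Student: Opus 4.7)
The plan hinges on a single structural observation: the weighted optimistic step of \textbf{Algorithm~\ref{alg:HDMDx}} is classical OMD against the \emph{disaggregated} opponent iterates. Define $\y_t \coloneqq (t-2)\w_t - (t-3)\w_{t-1}$ (and analogously $\x_t \coloneqq (t-2)\z_t - (t-3)\z_{t-1}$). Since $\w_t = \frac{1}{t-2}\sum_{s=3}^t \y_s$ by construction, $\y_t \in \Delta_n$ is the $t$-th genuine probability vector played by the $\y$-player, and a one-line algebra check shows
\begin{align*}
-2(t-2)A\w_t + 3(t-3)A\w_{t-1} - (t-4)A\w_{t-2} = -2A\y_t + A\y_{t-1}.
\end{align*}
Hence $(\x_t,\y_t)$ evolve precisely as in the classical OMD recursion~\eqref{eq:omd_equiv}; the only new ingredient is that each player publishes the Cesàro average $\z_t$, $\w_t$ instead of the raw iterate.

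Given this reduction, I apply the standard optimistic mirror-descent regret bound to each player. With the entropic map on $\Delta_m$ ($1$-strongly convex in $\|\cdot\|_1$) this yields
\begin{align*}
R^\x_T \coloneqq \sum_{t=3}^T \ip{\x_t - \x}{-A\y_t} \leq \frac{\log m}{\eta} + \eta\sum_{t=4}^T \|A(\y_t-\y_{t-1})\|_\infty^2 - \frac{1}{4\eta}\sum_{t=4}^T \|\x_t-\x_{t-1}\|_1^2,
\end{align*}
and symmetrically for $R^\y_T$. Because $\|A\v\|_\infty \leq |A|_{\max}\|\v\|_1$, the non-adaptive choice $\eta = \tfrac{1}{2|A|_{\max}}$ satisfies $\eta|A|_{\max}^2 \leq \tfrac{1}{4\eta}$; summing the two regrets therefore absorbs the positive stability term of each player into the negative stability term of the other. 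Only the Bregman-divergence terms survive, producing a \emph{constant}-in-$T$ bound $R^\x_T + R^\y_T \leq C$ with $C \leq (20+\log m+\log n)|A|_{\max}$ once constants are tracked carefully.

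For the value-of-the-game claim~\eqref{eq:HDMD_value_of_the_game}, I use $(T-2)\z_T = \sum_{s=3}^T \x_s$ and $(T-2)\w_T = \sum_{s=3}^T \y_s$. The $\x$-regret then gives $\max_{\z}\ip{\z}{A\w_T} \leq \tfrac{1}{T-2}\bigl(\sum_s \ip{\x_s}{A\y_s} + R^\x_T\bigr)$, while the $\y$-regret gives $\min_{\w}\ip{\z_T}{A\w} \geq \tfrac{1}{T-2}\bigl(\sum_s \ip{\x_s}{A\y_s} - R^\y_T\bigr)$. Both $V$ and $\ip{\z_T}{A\w_T}$ lie inside $[\min_\w\ip{\z_T}{A\w},\,\max_\z\ip{\z}{A\w_T}]$, so $|V - \ip{\z_T}{A\w_T}| \leq \tfrac{R^\x_T + R^\y_T}{T-2} \leq \tfrac{C}{T-2}$.

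For the honest-regret claim~\eqref{eq:HDMD_low_regret}, the key is that the previous inequality is \emph{time-uniform}: since the step size is horizon-independent, the identical argument at any round $t \geq 3$ gives $\max_{\z}\ip{\z - \z_t}{A\w_t} \leq \tfrac{R^\x_t + R^\y_t}{t-2} \leq \tfrac{C}{t-2}$. Combining with the elementary inequality $\max_\z \sum_t f_t(\z) \leq \sum_t \max_\z f_t(\z)$ (applied to the linear functions $f_t(\z) \coloneqq \ip{\z - \z_t}{A\w_t}$) yields
\begin{align*}
\max_{\z\in\Delta_m}\sum_{t=3}^T \ip{\z_t - \z}{-A\w_t} \leq C\sum_{t=3}^T \frac{1}{t-2} \leq C\log\bigl(2(T-2)\bigr),
\end{align*}
which is exactly the advertised $O(\log T)$ rate. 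I expect the main technical obstacle to be the bookkeeping in the second paragraph: one has to track every constant in the optimistic MD regret bound so that the cross-cancellation produces the sharp coefficient $(20+\log m+\log n)|A|_{\max}$. The averaging-to-per-round amplification in the final step is then essentially a Jensen-type argument and introduces no new difficulty.
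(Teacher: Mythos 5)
Your proposal is correct and follows essentially the same route as the paper's proof in Appendix~\ref{app:HDMD_proof}: collapse the weighted gradients to the classical optimistic update on the raw iterates, bound only the \emph{sum} $R_T^{\x}+R_T^{\y}$ of the auxiliary regrets by a $T$-independent constant via cross-cancellation of the stability terms, sandwich $V$ and $\ip{\z_T}{A\w_T}$ between $\min_{\y}\ip{\z_T}{A\y}$ and $\max_{\x}\ip{\x}{A\w_T}$ for the game-value claim, and apply the same bound at every intermediate round $t$ to convert the per-round duality gap $O(1/(t-2))$ into the harmonic-sum $O(\log T)$ honest regret. The only differences are cosmetic constant bookkeeping.
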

\begin{proof}
See \textbf{Appendix \ref{app:HDMD_proof}}.
\end{proof}

% !TEX root = 00_ICML2018-game.tex
\subsection{Robust Optimistic Mirror Descent}\label{subsec:romd}
In this section, we introduce  \textbf{robust optimistic mirror descent} (ROMD), which is a novel algorithm even for online convex optimization. 

%The prox center of $\psi$ is defined by $\x_c = \argmin_{\x \in \mathcal{D}} \psi(\x)$. 
%We design another algorithm, the  , which achieves the optimal $O\l(\frac{1}{\sqrt{T}}\r)$ regret using solely the information of $|A|_{\max}$.
Let $\psi$ be 1-strongly convex with respect to $\| \cdot \|$, and suppose we are minimizing the regret against an arbitrary sequence of convex functions $f_1, f_2, \ldots $ in a constraint set $\mathcal{D}$. Assume that each function is $G$-Lipschitz in $\| \cdot \|$. Assume also that no Bregman projection is needed (i.e., $MD_{\eta}(\x,\g) \in \Dcal$ for any $\x$ and $\g$); this is, for instance, the case for the entropic mirror map. %Define $\x_c \coloneqq \argmin_{\x \in \mathcal{D}} \psi(\x)$ to be the prox center of the constraint set. 

We state ROMD in the general form in \textbf{Algorithm \ref{alg:RDMD}}.

\begin{theorem}[$O(\sqrt{T})$-Adversarial Regret]\label{thm:RDMD}
Suppose that $\| \nabla f_t \|_* \leq G$ for all $t$. Then playing $T$ rounds of \textbf{Algorithm \ref{alg:RDMD}} with $\eta_t = \frac{1}{G\sqrt{t}}$ against an arbitrary sequence of convex functions has the following guarantee on the regret:
\begin{align}
\max_{\x\in\Delta_m}\sum_{t=1}^T \ip{\x_t - \x}{ \nabla f_t (\x_t)} &\leq   G\sqrt{T}\l( 18+2D^2\r)  \nn\\
&\hspace{40pt}+ GD\l(3\sqrt{2} + 4D \r) \nn \\
& = O\l( {\sqrt{T} }\r).  \nn
\end{align}
\end{theorem}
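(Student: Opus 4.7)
The plan is to run the standard OMD regret analysis with the anytime step-size $\eta_t = 1/(G\sqrt t)$ and then accommodate the time-varying $\eta_t$ via Abel summation. First, I would derive a per-round regret inequality. ROMD maintains a ``mixing'' auxiliary iterate alongside the played $\x_t$ (as in \citet{rakhlin2013optimization}): a prox iterate $\hat{\x}_t$ with $\hat{\x}_{t+1} = MD_{\eta_t}(\hat{\x}_t,\nabla f_t(\x_t))$, while $\x_{t+1}$ is obtained from $\hat{\x}_{t+1}$ by one extra ``predictive'' mirror step using $\nabla f_t(\x_t)$ as a surrogate for the yet-unseen $\nabla f_{t+1}$. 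The three-point identity, applied to the first-order optimality of $\hat{\x}_{t+1}$, gives for every $\x \in \Dcal$
\[
\eta_t \ip{\hat{\x}_{t+1} - \x}{\nabla f_t(\x_t)} \le D(\x, \hat{\x}_t) - D(\x, \hat{\x}_{t+1}) - D(\hat{\x}_{t+1}, \hat{\x}_t).
\]
Writing $\ip{\x_t - \x}{\nabla f_t(\x_t)}$ as $\ip{\hat{\x}_{t+1}-\x}{\nabla f_t(\x_t)} + \ip{\x_t - \hat{\x}_{t+1}}{\nabla f_t(\x_t)}$, using Young's inequality on the cross term, and invoking $1$-strong convexity of $\psi$ to turn the negative Bregman divergence into a negative squared norm bounds the slack by a constant multiple of $\eta_t G^2$. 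The mixing step is crucial here: it ensures $\|\x_t - \hat{\x}_t\| = O(\eta_{t-1} G)$, small enough to be absorbed by the negative curvature.

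Second, I would sum the per-round inequalities over $t = 1,\ldots,T$ and telescope via Abel summation. Dividing by $\eta_t$ first yields
\[
\sum_{t=1}^T \frac{D(\x, \hat{\x}_t) - D(\x, \hat{\x}_{t+1})}{\eta_t} = \frac{D(\x,\hat{\x}_1)}{\eta_1} + \sum_{t=2}^T \Bigl(\frac{1}{\eta_t} - \frac{1}{\eta_{t-1}}\Bigr) D(\x, \hat{\x}_t) - \frac{D(\x,\hat{\x}_{T+1})}{\eta_T}.
\]
By the paper's definition, $D(\x, \hat{\x}_t) \le D^2$ throughout. With $1/\eta_t - 1/\eta_{t-1} = G(\sqrt t - \sqrt{t-1}) \le G/(2\sqrt{t-1})$, this telescoping expression is $O(D^2 G \sqrt T)$. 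The slack terms contribute $G^2 \sum_{t=1}^T \eta_t = G\sum_{t=1}^T 1/\sqrt t \le 2G\sqrt T$. Assembling both, together with an additive remainder of order $GD + GD^2$ absorbing the $t = 1$ start-up and the mixing-update boundary effects, yields precisely the stated bound $G\sqrt T(18 + 2D^2) + GD(3\sqrt 2 + 4D)$.

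The main obstacle is the joint control of the two interleaved sequences under a \emph{shrinking} $\eta_t$. In the fixed-step-size analysis of \citet{rakhlin2013optimization}, the negative quadratic $-D(\hat{\x}_{t+1},\hat{\x}_t)/\eta_t$ exactly cancels the Young slack from $\ip{\hat{\x}_{t+1}-\x_t}{\nabla f_t(\x_t)}$; with a decreasing $\eta_t$, one must verify that the $\|\x_t - \hat{\x}_t\|$ discrepancies, which now couple \emph{different} step-sizes across rounds, do not spoil this cancellation. This is where the modest inflation of constants (the factor $18$ rather than something smaller) enters. Everything else --- the Abel summation itself, the inequality $\sum_{t=1}^T 1/\sqrt t \le 2\sqrt T$, and the boundary bookkeeping --- is routine. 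Crucially, since $\eta_t$ depends only on $t$ and $G$, the resulting bound is automatically an \emph{anytime} bound, which is precisely the feature that distinguishes ROMD from \citet{rakhlin2013optimization}.
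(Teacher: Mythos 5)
There is a genuine gap, and it comes in two related pieces. First, the algorithm you analyze is not the paper's \textbf{Algorithm \ref{alg:RDMD}}. ROMD is a \emph{single}-sequence method: it forms $\tilde{\x}_t = \frac{t-1}{t}\x_t + \frac{1}{t}\x_c$ (the ``mixing'' here means mixing with the prox center, not interleaving a primary and a secondary prox sequence) and then takes one mirror step from $\tilde{\x}_t$ along the optimistic gradient $2\nabla f_t(\x_t) - \nabla f_{t-1}(\x_{t-1})$. The two-sequence scheme you describe is equivalent to the one-sequence form only for a \emph{constant} step-size (cf.\ Appendix \ref{app:quiv}); with $\eta_t = 1/(G\sqrt{t})$ the equivalence breaks, and in any case your version drops the mixing-with-$\x_c$ step, which is the one new ingredient the proof actually needs.

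Second, and more fundamentally, your Abel-summation step rests on the claim ``$D(\x,\hat{\x}_t)\le D^2$ throughout,'' which is false in the intended setting. The paper's $D^2$ bounds $\sup_{\x}D(\x,\x_c)$ --- the second argument is fixed at the prox center --- not $D(\x,\y)$ for arbitrary $\y$. For the entropic mirror map on the simplex (the case the theorem is applied to), $D(\x,\y)$ blows up as $\y$ approaches the boundary, so the sum $\sum_{t\ge 2}\bigl(\tfrac{1}{\eta_t}-\tfrac{1}{\eta_{t-1}}\bigr)D(\x^*,\hat{\x}_t)$ cannot be controlled this way. This is exactly the problem the mixing step solves: by joint convexity of the Bregman divergence, $D(\x^*,\tilde{\x}_t)\le \frac{t-1}{t}D(\x^*,\x_t)+\frac{1}{t}D(\x^*,\x_c)$, and since $\sqrt{t}\cdot\frac{t-1}{t}\le\sqrt{t-1}$ the potentially unbounded terms $D(\x^*,\x_t)$ telescope \emph{exactly} against the weights $\sqrt{t}=1/(G\eta_t)$, leaving only $\sum_t \frac{1}{\sqrt{t}}D(\x^*,\x_c)\le 2D^2\sqrt{T}$ plus $D(\x^*,\x_1)\le D^2$. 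The paper also pays for the mixing in the negative term, showing $-D(\x_{t+1},\tilde{\x}_t)\le -\frac{(t-1)^2}{4t^2}\|\x_t-\x_{t+1}\|^2+\frac{D^2}{t^2}$, and then cancels the Young-inequality slack with $\alpha=G$. Without the mixing step and the joint-convexity telescoping, your argument does not close for the entropic geometry, so the proof as proposed does not establish the theorem.
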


\begin{proof}
See \textbf{Appendix~\ref{app:RDMD_proof}}.
\end{proof}

\begin{algorithm}[h]
	\caption{Robust Optimistic Mirror Descent}
	\begin{algorithmic}[1]
	\STATE Initialize $\x_1 = \x_c$, $\nabla f_{0}=0$, $\eta_t = \frac{1}{G\sqrt{t}}$\\
	\FOR{$t = 1, 2, ...$, }
		\STATE $\tilde{\x}_t = (\frac{t-1}{t}) \x_t + \frac{1}{t} \x_c$
		\STATE Set $\tilde{\nabla}_t = 2\nabla f_{t}(\x_t) - \nabla f_{t-1} (\x_{t-1})$, \\ play $\x_{t+1} = MD_{\eta_t}(\tilde{\x}_t, \tilde{\nabla}_t)$
		\STATE Observe $f_{t+1}$
		\ENDFOR
	\end{algorithmic}\label{alg:RDMD}
\end{algorithm}
%\begin{remark}
%It will become apparent in our analysis that one can set $\eta_t = \frac{1}{\alpha G\sqrt{t}}$ for an constant $\alpha>0$, and this would lead to another $O(\sqrt{T})$ bound. Nonetheless, the choice of $\alpha$ can be quite influential in practice; see Section \ref{numerics}.
%\end{remark}

When specialized to zero-sum games, it suffices to take $\x_c = \frac{1}{m}1_m$, $G = |A|_{\max}$, $D = \log m$, and $\psi$ being the entropic mirror map.

\begin{remark}
Our analysis of ROMD crucially relies on the assumption that no Bregman projection is needed. We have not been able to generalize our analysis to the case with Bregman projections.
\end{remark}

%\subsection{Robust Optimistic Mirror Descent for Games}
%We specialize ROMD in Section \ref{} to the two-player zero-sum games in \textbf{Algorithm \ref{alg:RDMD_game}}.
%\begin{algorithm}\label{alg:RDMD_game}
%	\caption{Robust Optimistic Mirror Descent}
%	$\x_1 = \frac{1}{m} 1_m$, $\y_0$ arbitrary\\
%	For $t = 1, 2, ...$, do
%	\begin{algorithmic}[1]
%		\STATE $\tilde{\x}_t = (\frac{t-1}{t}) \x_t + \frac{1}{t} \l(\frac{1}{m}1_m\r)$
%		\STATE play $\x_{t+1} = MD_{\eta}(\tilde{\x}_t, -2A\y_{t} +A\y_{t-1})$, observes $-A\y_{t+1}$
%	\end{algorithmic}
%\end{algorithm}
%The following theorem is an immediate corollary of \textbf{Theorem \ref{thm:RDMD}}.
%\begin{theorem}[$O(\sqrt{T})$-Adversarial Regret]\label{thm:RDMD_game}
%Playing $T$ rounds of \textbf{Algorithm \ref{alg:RDMD_game}} with $\eta_t = \frac{1}{9|A|_{\max}\sqrt{t}}$ in a zero-sum game against any opponent has the following guarantee on the regret:
%\begin{align}
%\max_{\x\in\Delta_m}\sum_{t=1}^T \ip{\x_t - \x}{ -A\y_t} &\leq   |A|_{\max}\sqrt{T}\l( 1+18D^2\r) \nn\\
%& \hspace{30pt}+ 3|A|_{\max}D\l(\sqrt{2} + 12D \r) \nn \\
%& = O\l( {\sqrt{T} }\r).  \nn
%\end{align}
%\end{theorem}

% !TEX root = 00_ICML2018-game.tex
\def\p{{  \mathbf{p} }}
\def\xsig{{ \check{\x} }}
\def\ysig{{ \check{\y} }}
\subsection{Let's be honest: The full framework}\label{subsec:full}
%$C=\langle \x_t - \x^*, -A\y_t \rangle$. We run HDMD until 

%
We now present our approach for solving \eqref{eq:two_player_zero_sum_game}.

To ease the notation, define 
\beq
\z^*_t  \coloneqq \arg\min_{\x\in\Delta_m} \ip{\x}{-A\w_t} \nn
\eeq 
and 
\beq
\w^*_t  = \arg\min_{\y\in\Delta_n} \ip{\z_t}{A\y}. \nn
\eeq 
Let constants $C_1, C_2$, and $C_3$ be such that (see \textbf{Thereom \ref{thm:HDMD}}, \textbf{Theorem \ref{thm:RDMD}}, and \eqref{eq:HDMD_proof_hold5})
\begin{align}
&\ip{\z_t - \z_t^*}{ -A \w_t} \leq \frac{C_1}{t}  ,\ \quad \z_t, \w_t \text{ from OMD},  \label{eq:ADMD_hold1}\\
&\ip{\w_t - \w_t^*}{ A^\top \z_t} \leq \frac{C_1}{t}  , \quad \z_t, \w_t \text{ from OMD}, \label{eq:ADMD_hold2}\\
&\sum_{t=1}^T \ip{\z_t - \z^*}{ -A \y_t} \leq C_2 \sqrt{T}, \quad \z_t \text{ from ROMD and } \nn\\ 
&\hspace{140pt}\y_t \text{ arbitrary}, \label{eq:ADMD_hold4} \\
&| V - \z_T A \w_T| \leq \frac{C_3}{T}, \quad \z_T, \w_T \text{ from OMD}. \label{eq:ADMD_hold3}
\end{align}

From a high-level, our approach exploits the following simple observation: Suppose that we know $C_1$ above. If the instantaneous regret bound \eqref{eq:ADMD_hold1} and \eqref{eq:ADMD_hold2} hold true for all $t$, then we would trivially have the desired convergence. 

In contrast, if at any round the bound \eqref{eq:ADMD_hold1} is violated for the $\x$-player, then it must be due to an adversarial play, and we can simply switch to ROMD to get $O(\sqrt{T})$ regret. However, since $C_1$ (\textit{cf.}, \eqref{eq:HDMD_proof_hold5}) involves $n$, the number of opponent's strategies, the $\x$-player cannot compute it exactly. The situation is similar for the $\y$-player. We hence need to come up with a way to estimate $C_1$ for both players.
\begin{algorithm}[!t]
\caption{Let's Be Honest Optimistic Mirror Descent: $\x$-Player}
\begin{algorithmic}[1]
\STATE Initialize $b=1, t=1, \w_0 = \frac{1}{n}1_n$ and $\z_0 = \frac{1}{m}1_m$
\STATE Play $t$-th round of OMD-$\x$, observe $-A\p_t$
\STATE \IF {$G_t^{\w} \coloneqq \ip{\w_{t-1}}{A^\top\z_{t-1}} - \ip{\p_t }{A^\top\z_{t-1}} > \frac{b}{t-1}$ }{\STATE \hspace{20pt} Play $b^4 -1$ rounds of ROMD 
\STATE \hspace{20pt} $t \leftarrow t+1$ 
\STATE \hspace{23pt}$b \leftarrow 2b$
\STATE \hspace{20pt} Go to line 2.}  
\ENDIF
\STATE $-A\w_t \leftarrow -A\p_t$ 
\STATE \IF {$G_t^{\z} \coloneqq \ip{\z_{t}}{-A\w_{t}} - \ip{\z^*_t }{-A \w_{t}} > \frac{b}{t}$  }
{\STATE \hspace{20pt} Play $\xsig_{t+1} \coloneqq \z_t^*$%\coloneqq \frac{t}{t+1} \z_t + \frac{1}{t+1} \z_t^*$
\STATE \hspace{20pt} Play $b^4 -1$ rounds of ROMD 
\STATE \hspace{20pt} $t \leftarrow t+2$ 
\STATE \hspace{23pt}$b \leftarrow 2b$
\STATE \hspace{20pt} Go to line 2.} 
\ENDIF
\STATE $t \leftarrow t+1$
\STATE Go to line 2.
\end{algorithmic}\label{alg:ADMD-x}
\end{algorithm}

It is important to note that one can not na\"ively estimate $C_1$ by binary search separately on both players. The reason, and the major difficultly to the above approach, is as follows: Since in general $\ip{\z_t - \z_t^*}{ -A \w_t} \neq \ip{\w_t - \w_t^*}{ A^\top \z_t}$, it could be the case that, at the same round, the $\x$-player detects a bad instantaneous regret and switch to ROMD, while the $\y$-player remains in OMD, even though two players are both honest. However, our entire analysis of OMD would breakdown if the OMD is not played cohesively. 

Furthermore, recall that we also want robustness against any adversary. Therefore, a bad instantaneous regret indicates the possibility of receiving an adversarial play, and we need to switch to ROMD whenever this occurs. 

To resolve such issues, we devise a simple \textbf{signaling} scheme ($\check{\x}_t$ and $\check{\y}_t$ below), which synchronizes both players' $C_1$ estimate and also the OMD plays while guaranteeing robustness. 

In words, our signaling scheme is a ``Let's be honest'' message to the opponent: ``I am having a bad instantaneous regret. Please update your $C_1$ with me, and please pretend that I am adversarial for a small number of rounds, so that we can play honest OMD cohesively.'' It turns out that doing these extra signaling rounds do not hurt the convergence rates in OMD and ROMD at all.

Our full algorithm, termed \textbf{Let's Be Honest} (LbH) \textbf{Optimistic Mirror Descent}, is presented in \textbf{Algorithm \ref{alg:ADMD-x}} and \textbf{\ref{alg:ADMD-y}}.
{\remark In \textbf{Algorithm \ref{alg:ADMD-x}} and \textbf{\ref{alg:ADMD-y}}, the role of $b$ is to estimate the constant $C_1$ in \eqref{eq:ADMD_hold1}. Since our analysis requires $b$ to be the same for both players throughout the algorithm run, a simple way is to assume that, say, $m=n = 5$, compute the corresponding $\tilde{C}_1$, and set the initial $b\leftarrow\tilde{C}_1$. Doing so indeed improves upon constants in our convergence; we chose $b=1$ only for simplicity.}
\begin{remark}
There are some degree of freedom in \textbf{Algorithm \ref{alg:ADMD-x}} and \textbf{\ref{alg:ADMD-y}}. For instance, instead of doubling $b$ in Line 16, one can do $b \leftarrow (1+\epsilon)b$ for some $\epsilon >0$. In Line 5, one can also play $b^2-1$ rounds, rather than $b^4-1$. As will become apparent in \textbf{Theorem \ref{thm:ADMD}}, these variants only effect the constants but not the convergence rates. However, they do have impact on empirical performance; \textit{cf.}, Section \ref{sec:experiments}.
\end{remark}

\begin{algorithm}[!t]
\caption{Let's Be Honest Optimistic Mirror Descent: $\y$-Player}
\begin{algorithmic}[1]
\STATE Initialize $b=1, t=1, \w_0 = \frac{1}{n}1_n$ and $\z_0 = \frac{1}{m}1_m$
\STATE Play $t$-th round of OMD-$\y$, observe $A^\top\mathbf{o}_t$
\STATE \IF{$G_t^{\z} \coloneqq \ip{\z_{t-1}}{-A\w_{t-1}} - \ip{\mathbf{o}_t }{-A\w_{t-1}} > \frac{b}{t-1}$ }{\STATE \hspace{20pt} Play $b^4 -1$ rounds of ROMD 
\STATE \hspace{20pt} $t \leftarrow t+1$ 
\STATE \hspace{23pt}$b \leftarrow 2b$
\STATE \hspace{20pt} Go to line 2.}  
\ENDIF
\STATE $A\z_t \leftarrow A^\top\mathbf{o}_t$ 
\STATE \IF {$G_t^{\w} \coloneqq \ip{\w_{t}}{A^\top\z_{t}} - \ip{\w^*_t }{A^\top \z_{t}} > \frac{b}{t}$  }
{\STATE \hspace{20pt} Play $\ysig_{t+1}  \coloneqq \w_t^*$%\coloneqq \frac{t}{t+1} \w_t + \frac{1}{t+1} \w_t^*$
\STATE \hspace{20pt} Play $b^4 -1$ rounds of ROMD 
\STATE \hspace{20pt} $t \leftarrow t+2$ 
\STATE \hspace{23pt}$b \leftarrow 2b$
\STATE \hspace{20pt} Go to line 2.} 
\ENDIF
\STATE $t \leftarrow t+1$
\STATE Go to line 2.
\end{algorithmic}\label{alg:ADMD-y}
\end{algorithm}
The following key lemma ensures the two players to enter the ROMD plays coherently.
\begin{lemma}\label{lem:switching_together}
If the $\y$-player enters Line 12 of \textbf{Algorithm \ref{alg:ADMD-y}} at the $t$-th round, then the $\x$-player enters Line 4 of \textbf{Algorithm \ref{alg:ADMD-x}} at the $(t+1)$-th round. Conversely, if, at the $t$-th round, the $\y$-player does not enter Line 12 of \textbf{Algorithm \ref{alg:ADMD-y}}, then the $\x$-player does not enter Line 4 of \textbf{Algorithm \ref{alg:ADMD-x}} at the $(t+1)$-th round.

Exactly the same statements hold when the $\x$- and $\y$-player are reversed above.
\end{lemma}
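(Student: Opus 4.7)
The plan is to reduce both directions of the lemma to a single observation: $\w_t^*$ is defined as $\arg\min_{\y \in \Delta_n}\ip{\z_t}{A\y}$, so for any $\y \in \Delta_n$ we have $\ip{\w_t^*}{A^\top \z_t} \leq \ip{\y}{A^\top \z_t}$. This inequality, combined with tracking what the $\y$-player actually sends in round $t+1$, will directly determine whether the $\x$-player's threshold check $G_{t+1}^{\w} > b/t$ triggers. Throughout the argument I will maintain the invariant (verified by induction on $t$, alongside the claim itself) that both players share the same value of $b$ and the same round counter when they reach their respective Line~2 loops; this is guaranteed by the algorithmic design whenever the switching claim holds.

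For the forward direction, assume the $\y$-player enters Line~12 of \textbf{Algorithm \ref{alg:ADMD-y}} at round $t$, so that $G_t^{\w} = \ip{\w_t}{A^\top \z_t} - \ip{\w_t^*}{A^\top \z_t} > b/t$. The $\y$-player then plays $\ysig_{t+1} = \w_t^*$ as its action at round $t+1$, so from the $\x$-player's perspective $\p_{t+1} = \w_t^*$. Computing the $\x$-player's Line~3 quantity at round $t+1$ gives
\[
G_{t+1}^{\w} = \ip{\w_t}{A^\top \z_t} - \ip{\p_{t+1}}{A^\top \z_t} = \ip{\w_t}{A^\top \z_t} - \ip{\w_t^*}{A^\top \z_t} > \frac{b}{t} = \frac{b}{(t+1)-1},
\]
so the check triggers and the $\x$-player enters Line~4 at round $t+1$.

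For the converse, suppose the $\y$-player does \emph{not} enter Line~12 at round $t$, i.e.\ $G_t^{\w} \leq b/t$, and thus plays the honest OMD iterate $\w_{t+1} \in \Delta_n$ at round $t+1$; hence $\p_{t+1} = \w_{t+1}$. By the defining minimality of $\w_t^*$, we have $\ip{\w_t^*}{A^\top \z_t} \leq \ip{\w_{t+1}}{A^\top \z_t}$, so
\[
G_{t+1}^{\w} = \ip{\w_t}{A^\top \z_t} - \ip{\w_{t+1}}{A^\top \z_t} \leq \ip{\w_t}{A^\top \z_t} - \ip{\w_t^*}{A^\top \z_t} = G_t^{\w} \leq \frac{b}{t},
\]
so the threshold $b/((t+1)-1) = b/t$ is not exceeded and the $\x$-player does not enter Line~4 at round $t+1$. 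The roles-reversed statement follows by an identical argument using $\z_t^* = \arg\min_{\x \in \Delta_m}\ip{\x}{-A\w_t}$ and the symmetric signaling $\xsig_{t+1} = \z_t^*$.

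The only real bookkeeping concern is the aforementioned invariant that the round counters $t$ and the estimate $b$ stay synchronized between the players; the main observation above makes this invariant self-propagating, since whenever one player updates $(t,b)$ after a switch, the lemma guarantees the other player performs the matching update in the very next round. There is no genuine obstacle beyond carefully keeping the indexing straight between the two asymmetric-looking algorithms.
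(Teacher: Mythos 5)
Your proof is correct and follows essentially the same route as the paper's: the forward direction uses that the signal $\ysig_{t+1}=\w_t^*$ makes $G_{t+1}^{\w}$ equal to the quantity that exceeded $b/t$, and the converse uses the minimality of $\w_t^*$ to bound $G_{t+1}^{\w}\leq G_t^{\w}\leq b/t$. Your explicit remark about the synchronization of $b$ and the round counter is a point the paper leaves implicit, but it does not change the argument.
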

\begin{proof}
If the $\y$-player enters Line 12 of \textbf{Algorithm \ref{alg:ADMD-y}} at the $t$-th round, then $\ysig_{t+1}$ is signalled at the $(t+1)$-th round, and it must be the case that $\ip{\w_{t} - \w_{t}^* }{A^\top\z_{t}} > \frac{b}{t}$ (\textit{cf.}, Line 12 of \textbf{Algorithm \ref{alg:ADMD-y}}). Therefore, at the $(t+1)$-th round, the $\x$-player would receive $-A\ysig_{t+1} = -A \w^*_t$ and compute 
\begin{align*}
G^{\w}_{t+1} &= \ip{\w_{t}}{A^\top\z_{t}} - \ip{\ysig_{t+1} }{A^\top\z_{t}} \\
&= \ip{\w_{t} - \w_{t}^* }{A^\top\z_{t}}> \frac{b}{t} 
\end{align*}which then enters the Line 4 of \textbf{Algorithm \ref{alg:ADMD-x}}.

Conversely, suppose that the $\y$-player does not enter Line 12 of \textbf{Algorithm \ref{alg:ADMD-y}} at the $t$-th round (or, equivalently, plays OMD at the $(t+1)$-th round). Then $\ip{\w_{t} - \w_{t}^* }{A^\top\z_{t}} \leq \frac{b}{t}$, implying that 
\begin{align*}
G^{\w}_{t+1} &= \ip{\w_{t} - \w_{t+1} }{A^\top\z_{t}}  \\
&\leq \ip{\w_{t} - \w_{t}^* }{A^\top\z_{t}} \leq \frac{b}{t}
\end{align*}hence preventing the $\x$-player from entering Line 4 of \textbf{Algorithm \ref{alg:ADMD-x}}.

Exactly the same computation holds when we reverse the role of $\x$- and $\y$-player.
\end{proof}
Given \textbf{Lemma \ref{lem:switching_together}}, we now know that the $\x$-player switches to ROMD \textbf{if and only if} the $\y$-player does. The rest of the proof then readily follows from \textbf{Theorems \ref{thm:HDMD}} and \textbf{\ref{thm:RDMD}}.

\begin{theorem}\label{thm:ADMD}
Suppose the $\x$-player plays according to \textbf{Algorithm \ref{alg:ADMD-x}} for $T$ rounds, and let $\textup{R}_T$ be the regret up to time $T$. Then
\begin{enumerate}
\item Let $T = T_1 + T_2 + T_3$ where $T_1$ is the number of OMD plays, $T_2$ is the number of ROMD plays, and $T_3$ is the number of signaling rounds (playing $\xsig_t$ or $\ysig_t$). Then there are constants $C$ and $C'$, depending only on $m,n$ and $|A|_{\max}$, such that
\beq \label{eq:regret_overall}
\frac{1}{T}\textup{R}_T \leq  \frac{C\log T_1 + C' \sqrt{T_2}}{T_1+ T_2}.
\eeq
In particular, if the opponent plays honestly, then $\textup{R}_T = O(\log T_1) = O(\log T). $ If the opponent is adversarial, we have $\textup{R}_T  = O(\sqrt{T_2}) = O(\sqrt{T}). $

\item Suppose that the honest $\y$-player plays \textbf{Algorithm \ref{alg:ADMD-y}}. Then the pair $(\z_T, \w_T)$ constitutes an $O\l( \frac{1}{T}\r)$-approximate equilibrium:
\beq  \label{eq:value_ADMD}
| V - \ip{\z_T}{A\w_T} | \leq \frac{C''}{T}
\eeq
for some constant $C''$.
%
%\item Suppose that the $\y$-player is adversarial. Then playing $T$ rounds of \textbf{Algorithm \ref{alg:ADMD-x}} results in a $O(\sqrt{T})$-regret:
%\beq
%\textup{R}_T \leq  C'' \sqrt{T} \label{eq:adversarial_regret_ADMD}
%\eeq
%for some constant $C''$.
\end{enumerate}
\end{theorem}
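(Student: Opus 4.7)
The plan is to bound $R_T$ by decomposing the $T$ rounds into three disjoint sets---OMD rounds (of total count $T_1$), ROMD rounds ($T_2$), and signaling rounds ($T_3$)---and then invoke \textbf{Theorem \ref{thm:HDMD}} and \textbf{Theorem \ref{thm:RDMD}} together with careful bookkeeping of the doubling parameter $b$. Thanks to \textbf{Lemma \ref{lem:switching_together}}, both players are synchronized in every mode, so the OMD and ROMD regret guarantees apply to the respective subsequences. Denote by $B$ the terminal value of $b$ and by $K = \log_2 B$ the total number of doublings.

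For the OMD contribution, on any OMD round $t$ both tests have passed, so $G^{\z}_t \leq b(t)/t \leq B/t$; since $\z_t^{\star}$ is the one-shot best response to $\w_t$, this yields $\ip{\z_t - \z}{-A\w_t} \leq G^{\z}_t \leq B/t$ for any $\z \in \Delta_m$, and summing over OMD rounds gives $R_T^{\text{OMD}} = O(B \log T_1)$. For the ROMD contribution, the batch triggered after the $k$-th doubling runs for $(2^k)^4 - 1$ rounds and, by \textbf{Theorem \ref{thm:RDMD}}, incurs $O((2^k)^2)$ regret; the geometric sum over the $K$ batches is $O(B^2) = O(\sqrt{T_2})$, since the construction forces $T_2 \geq \Omega(B^4)$. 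The signaling contribution is trivially at most $2K\,|A|_{\max} = O(\log B)$, because each doubling triggers at most one signaling play.

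Putting the three pieces together yields $R_T \leq O(B \log T_1 + \sqrt{T_2} + \log B)$. In the honest opponent case, conditions \eqref{eq:ADMD_hold1}--\eqref{eq:ADMD_hold2} guarantee that once $b \geq C_1$ neither test fires, so $B = O(C_1)$ and $T_2, T_3 = O(1)$, giving $R_T = O(\log T_1) = O(\log T)$. In the adversarial case, $B \leq O(T_2^{1/4})$ forces $B \log T_1 \leq T_2^{1/4} \log T = o(\sqrt{T})$, hence $R_T = O(\sqrt{T_2}) = O(\sqrt{T})$. Dividing by $T_1 + T_2$ and absorbing the negligible $T_3 = O(\log T)$ then recovers \eqref{eq:regret_overall}.

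For Part 2, the honest-case analysis above already establishes that $T_2 + T_3 = O(1)$ when both players run \textbf{Algorithm \ref{alg:ADMD-x}} and \textbf{Algorithm \ref{alg:ADMD-y}} cohesively, so the two players execute OMD together for $T_1 = T - O(1)$ rounds. Applying the second part of \textbf{Theorem \ref{thm:HDMD}} to this OMD subsequence then yields $|V - \ip{\z_T}{A\w_T}| \leq O(1/T_1) = O(1/T)$, which is \eqref{eq:value_ADMD}. I expect the main obstacle to be the bookkeeping of the doubling scheme: confirming rigorously that $B$ is bounded by a constant depending only on $|A|_{\max}, m, n$ in the honest case (so that only a constant number of ROMD and signaling rounds occur), and, in the adversarial case, that the $B \log T_1$ term is dominated by the $\sqrt{T_2}$ contribution to the averaged bound \eqref{eq:regret_overall}.
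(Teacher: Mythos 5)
Your proposal is correct and follows essentially the same route as the paper: the same decomposition into OMD/ROMD/signaling rounds, the same reliance on Lemma \ref{lem:switching_together} for synchronization, the same doubling bookkeeping ($b \leq 2^{T_3}$, $T_3 \leq \lceil \log C_1\rceil$ in the honest case, and $T_2 = \Omega(b^4)$ to absorb the $b\log T_1$ term into $\sqrt{T_2}$ in the adversarial case), and the same reduction of Part 2 to Theorem \ref{thm:HDMD} on the length-$T_1$ OMD subsequence. The only cosmetic difference is that you sum the ROMD regret batch-by-batch as a geometric series $O(B^2)$ rather than invoking a single $C_2\sqrt{T_2}$ bound, which is a slightly more careful accounting of the restarted ROMD runs but changes nothing structurally.
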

\begin{figure*}[!h]
    \centering
    \subfigure[{\label{fig:aq}}Value of the game.]{{\includegraphics[keepaspectratio=true,scale=0.65]{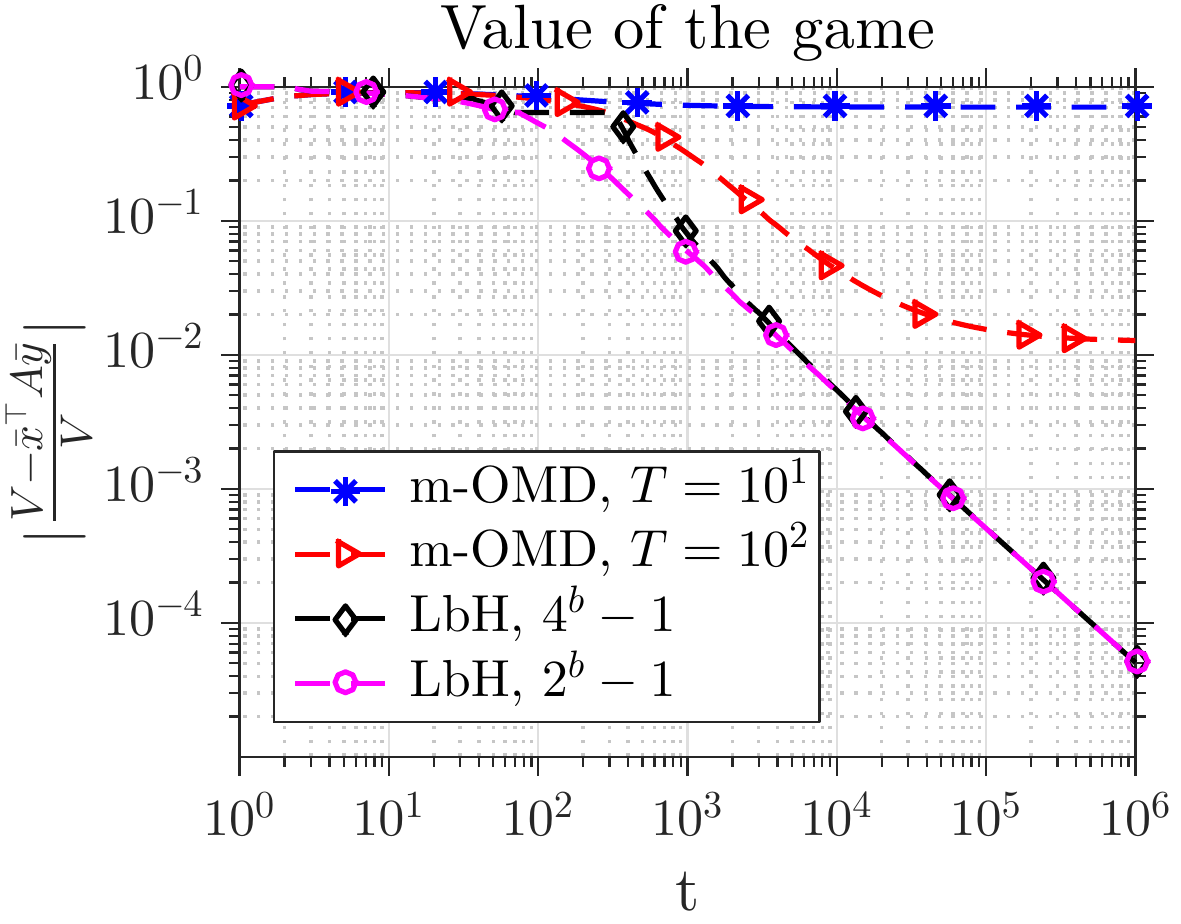} }}
    \subfigure[{\label{fig:bq}}Regret.]{{\includegraphics[keepaspectratio=true,scale=0.65]{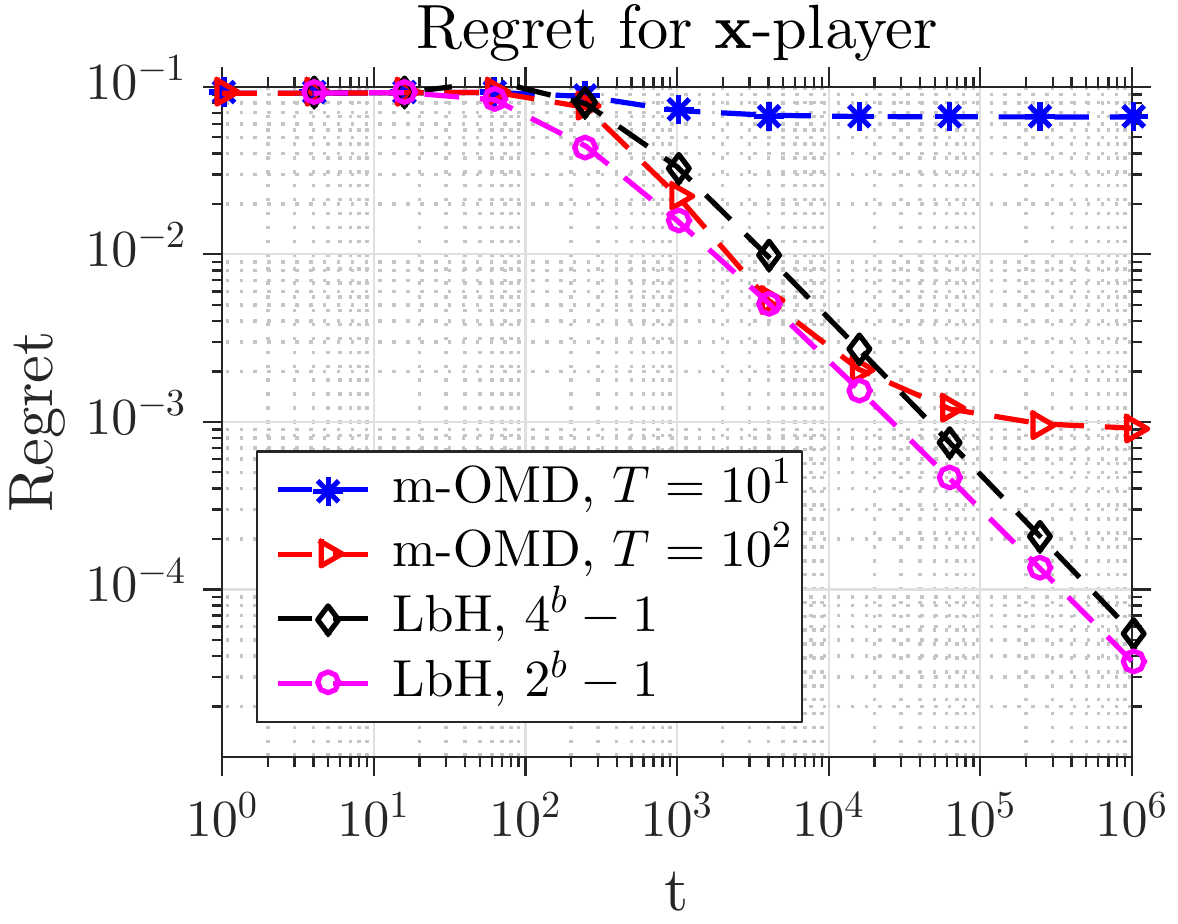} }}%
    \caption{Honest setting.}
    \label{fig:honest}
\end{figure*}
\begin{proof}
Suppose first that both players are honest.

We first prove the individual regret for the $\x$-player. We split the terms as follows:
\begin{align} \label{eq:ADMD_hold5}
\textup{R}_T &=    \textup{R}_{T_1}(\textup{playing OMD})  + \textup{R}_{T_2}(\textup{playing ROMD}) \nn\\
&\hspace{30pt} + \textup{R}_{T_3}(\textup{signaling}) . 
\end{align}
Recall \eqref{eq:ADMD_hold1}-\eqref{eq:ADMD_hold3}. We claim that 
\begin{enumerate}[label=(\alph*)]
\item $T_3 \leq \lceil \log C_1 \rceil. $
\item $T_2 \leq 16\cdot\frac{16^{T_3-1} -1}{15} \coloneqq C_1'$.
\end{enumerate}
Indeed, after $\lceil \log C_1 \rceil$-times signaling, we would have $b = 2^{T_3} > C_1$. Then \eqref{eq:ADMD_hold1} and \eqref{eq:ADMD_hold2} imply that we will never enter Line 12 again. On the other hand, we have
\beq \nn
T_2 \leq \sum_{r=1}^{T_3} 2^{4r} = \frac{16^{T_3-1} -1}{15}.
\eeq
Combining (a), (b) and using \eqref{eq:ADMD_hold1}, \eqref{eq:ADMD_hold4} in \eqref{eq:ADMD_hold5}, we conclude that
\begin{align*}
\textup{R}_T &\leq C_1\log T_1 + C_2\sqrt{T_2} + 2|A|_{\max}T_3 \\
& \leq C_1\log T_1  + C_2 \sqrt{C_1'} + 2|A|_{\max} \lceil \log C_1 \rceil \\
&= O(\log T_1) = O(\log T) 
\end{align*}which establishes \eqref{eq:regret_overall} in the honest case.

For convergence to the value of the game, we have, by \eqref{eq:ADMD_hold3},
\begin{align*}
| V - \ip{\z_T}{A\w_T} | \leq \frac{C_3}{T-T_2-T_3} \leq \frac{C_3}{T- C^*} 
\end{align*}where $C^* = \lceil \log C_1 \rceil + C_1'.$ The proof of \eqref{eq:value_ADMD} is completed by using the fact that $\frac{1}{T-C^*} \leq \frac{C^*}{T}$ when $T\geq \frac{C^{*2}}{C^*-1}$.

Finally, we show \eqref{eq:regret_overall} in the adversarial case.

Let $T_1, T_2$, and $T_3$ be as before, and we again split the regret into:
\begin{align}
\textup{R}_T &=    \textup{R}_{T_1}(\textup{playing OMD})  + \textup{R}_{T_2}(\textup{playing ROMD}) \nn\\
&\hspace{30pt} + \textup{R}_{T_3}(\textup{signaling}).  \nn
\end{align}
Notice that this time the inequalities \eqref{eq:ADMD_hold1} and \eqref{eq:ADMD_hold2} do not apply since the opponent no longer plays OMD collaboratively. However, by Line 12 of \textbf{Algorithm \ref{alg:ADMD-x}}, for every OMD play we must have 
\begin{align*}
\ip{\z_{t}}{-A\w_{t}} - \ip{\z^*_t }{-A \w_{t}} \leq \frac{b}{t} \leq \frac{2^{T_3}}{t}.
\end{align*}
Following the analysis as in the honest setting, we may further write
\beq \nn
\textup{R}_T \leq    2^{T_3}\log T_1 + C_2\sqrt{T_2} + 2|A|_{\max} T_3.
\eeq
It hence suffices to show that 
\beq \label{eq:ADMD_hold6}
2^{T_3} \log T_1 \leq C^{**}\sqrt{T_1 + T_2}.
\eeq
for some constant $C^{**}$.
To see \eqref{eq:ADMD_hold6}, recall that 
\beq \nn
T_2 = \frac{16(16^{T_3} -1)}{15} \geq 16^{T_3-1}.
\eeq
But then
\begin{align*}
\frac{2^{T_3} \log T_1 }{\sqrt{T_1 + T_2}} &\leq \frac{2^{T_3} \log T_1 }{\sqrt{2 \sqrt{T_1  T_2}}} \\
&\leq \frac{2^{T_3} \log T_1 }{2^{T_3-1} \cdot \sqrt{2}  \cdot \sqrt[4]{T_1} } \leq C^{**}.
\end{align*}for some universal constant $C^{**}$.

\end{proof}
\begin{remark}
As is evident from the proof, we have made no attempt to sharpening the constants, and hence our bounds can be numerically loose.
\end{remark}

\section{Experiments}\label{sec:experiments}
The purpose of this section is to provide numerical evidence to the following claims of our theory:
\begin{enumerate}
\item The LbH algorithm does not require knowing the time horizon beforehand, and our step-sizes are non-adaptive. Therefore, all quantities of interest, such as regrets or game value, should steadily decrease along the algorithm run.

\item The LbH algorithm automatically adjusts to honest and adversarial opponents.
\end{enumerate}

For comparison, we include the modified OMD (henceforth abbreviated as m-OMD) of \citep{rakhlin2013optimization} in our experiment, for different choices of time horizon. 

%We provide synthetic experiments for comparing our LbHOMD and \cite{rakhlin2013optimization}'s modified OMD (henceforth abbreviated OMD) in the honest and adversarial setting.

We generate the entries of $A$ uniformly at random in the interval $[-1, 1]$, and we set $m=200$ and $n=300$.

%Given by an $m\ \times n$ payoff matrix $A$, we are trying to find a saddle point for the template in \eqref{eq:two_player_zero_sum_game}.
%The problem in \eqref{eq:two_player_zero_sum_game} can be treated as a minimization problem of the form 
%\beq
%    \min_{\x \in \Delta_m, \y \in \Delta_n} \{ -min(A^\top\x) + max(A\y))\}
%    \label{eq:cvx}
%\eeq
%We obtained the optimal game value $V$ by solving minimization problem in~\eqref{eq:cvx} with cvx\footnote{Although the optimal value for \eqref{eq:cvx} is zero, resulting $\x$ and $\y$ vectors gives the solution of \eqref{eq:two_player_zero_sum_game}.}. $A$ is randomly generated with uniform distribution of its entries in the interval $[-1, 1]$. $m$ and $n$ are number of strategies , chosen as $200$ and $300$, respectively. 

%Experiments were run on \textit{MacBook Pro} personal computer with \textit{2.8 GHz Intel Core i7} processor and \textit{16 GB 1600 MHz DDR3} memory. 
We consider two scenarios:
\begin{enumerate}
  \item \textit{Honest setting}: Both players adhere to the prescribed algorithms and try to reach the Nash equilibrium collaboratively.
  \item \textit{Adversarial setting}: The $\y$-player greedily maximizes the instantaneous regret of the $\x$-player.
\end{enumerate}

\begin{figure*}
    \centering
    \subfigure[{\label{fig:aq1}}Regret comparison.]{{\includegraphics[keepaspectratio=true,scale=0.65]{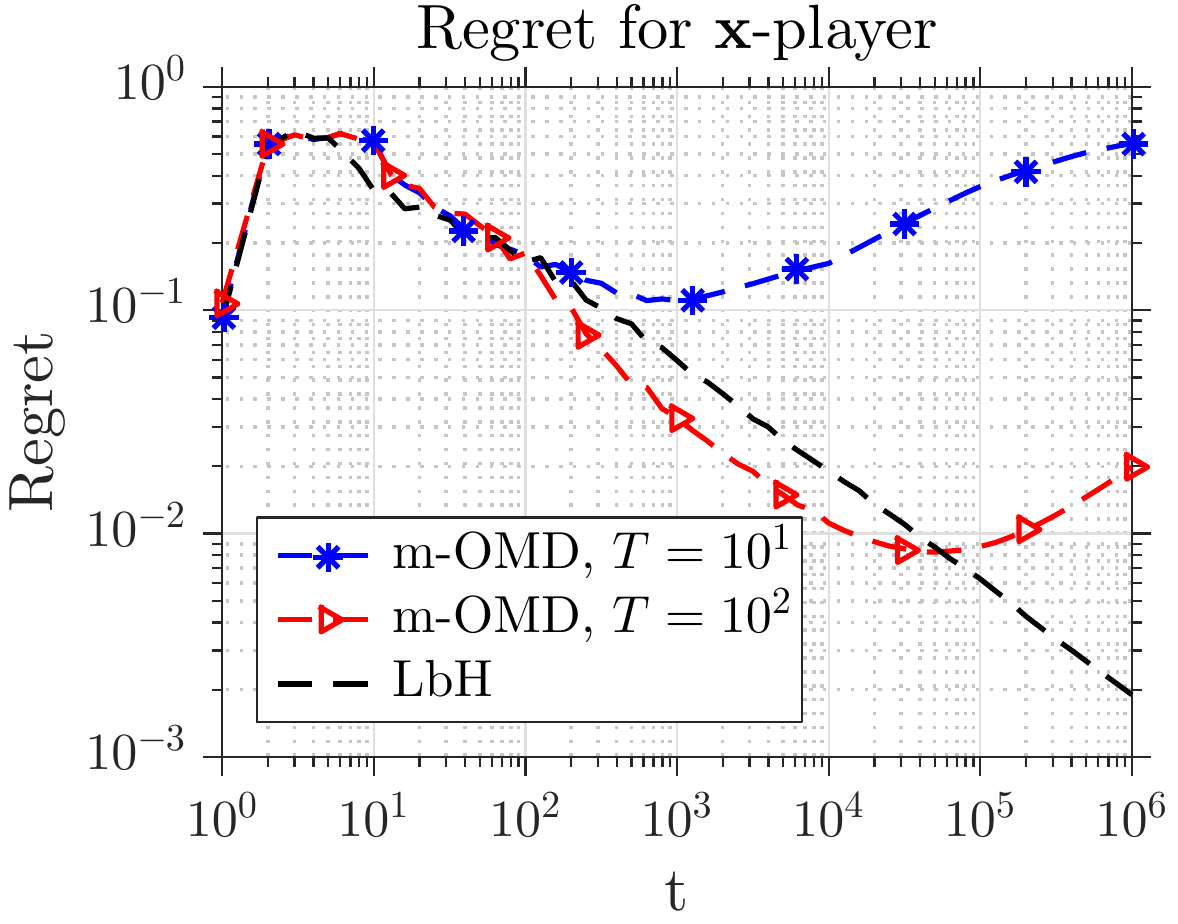} }}%
    \subfigure[{\label{fig:bq1}}Upper bound.]{{\includegraphics[keepaspectratio=true,scale=0.65]{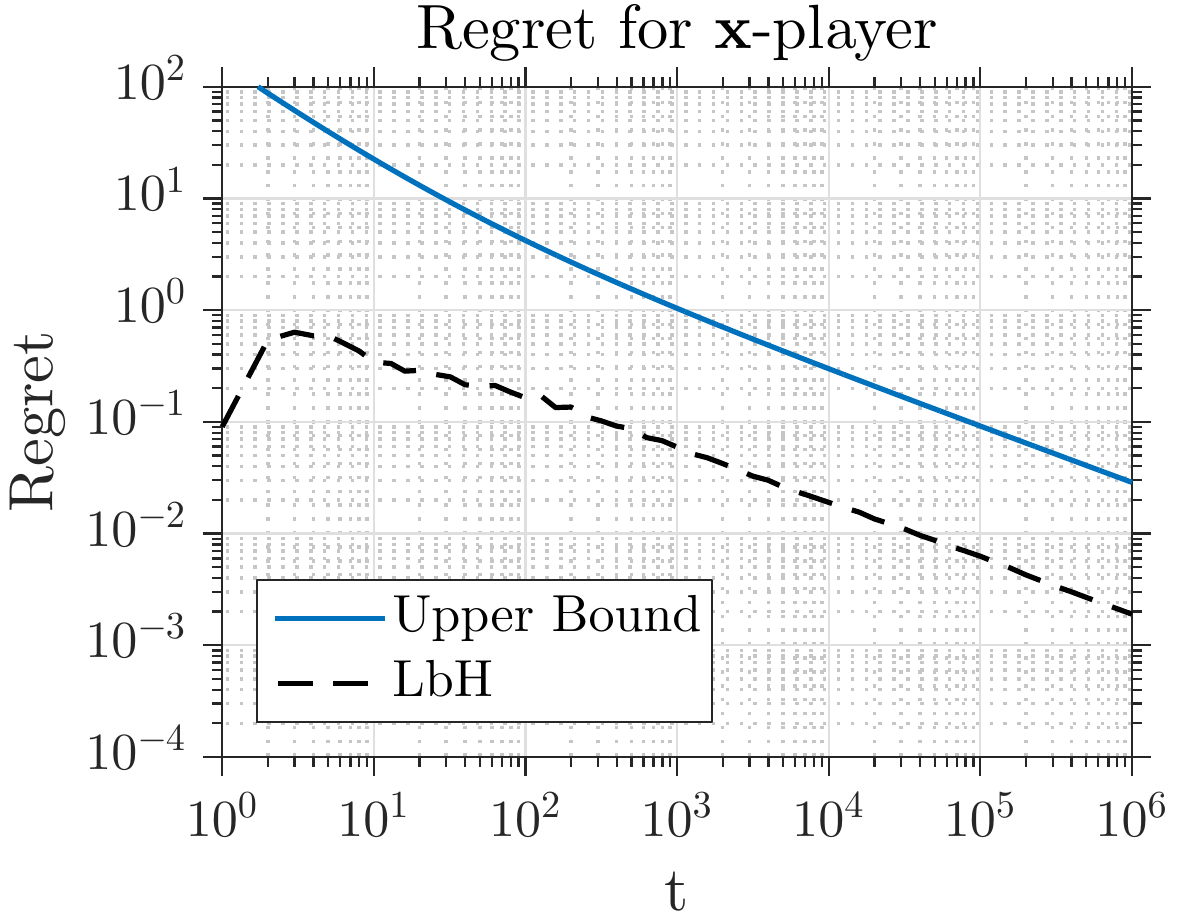} }}%
    \caption{Adversarial setting.}
    \label{fig:adv}
\end{figure*}
\subsection{Honest Setting}
The convergence for the honest setting is reported in \textbf{Figure \ref{fig:honest}}, for two different parameter choices of LbH and m-OMD. 

For both convergence to the game value and individual regret, after a short burn-in period (due to not knowing the $C_1$ in \eqref{eq:ADMD_hold1} and \eqref{eq:ADMD_hold2}), the LbH algorithm enters a steady $O\l(\frac{1}{T}\r)$-decreasing phase, as expected from our theory. %and uniformly outperforms m-OMD. 
~On the other hand, as the m-OMD chooses step-sizes according to the time horizon, it eventually saturates in both plots.

As noted by \citep{rakhlin2013optimization}, it is possible to prevent the saturation of m-OMD by employing the doubling trick or the techniques in \citep{auer2002adaptive}. However, doing so not only complicates the algorithm, but also introduces extra $\log T$ factors in the convergence of honest regret, since the doubling trick loses a $\log T$ factor for logarithmic regrets. Such rates are sub-optimal given our results.

\subsection{Adversarial Setting}
We report the regret comparison in \textbf{Figure \ref{fig:adv}}. 

In the adversarial setting, the LbH algorithm is essentially running the ROMD, and hence we see a straight $O({T}^{-\frac{1}{2}})$ decrease in the regret, as dictated by our upper bound in \textbf{Theorem \ref{thm:RDMD}}; see \textbf{Figure \ref{fig:adv}}-(b). The parameter choice does not effect the performance.

The m-OMD slightly outperforms LbH for a short period, but eventually blows up in regret. We remark that the short-term good empirical performance is due to the adaptive step-sizes of m-OMD, which require additional work per-iteration. Our LbH algorithm is non-adaptive, but is already competitive in terms of empirical performance.

%\textit{Let's be Honest} algorithm is compared against \textit{Optimistic Mirror Descent} which is highly dependent on the time horizon. OMD uses adaptive step size as given in (Rakhlins Paper??) for both settings. On the other hand, our method uses a constant step size for the honest setting, and a time varying step size given as in \textbf{Theorem ~\ref{thm:RDMD}} for the adversarial setting.  For both cases our method outperformed OMD.  

%\begin{figure}
%    \centering
%    {{\includegraphics[keepaspectratio=true,scale=0.7]{} }}%
%    \caption{Adversarial setting }
%    \label{fig:adv}
%\end{figure}

%It is seen in \textbf{Figure \ref{fig:honest}} that when both players are playing honest, LbH goes with $\mathcal{O}(\frac{1}{T})$ in the game value convergence while OMD saturates after predetermined time horizon. 
% !TEX root = 00_ICML2018-game.tex
\section{Conclusion and Future Work}
We studied the problem of zero-sum games in the decentralized setting, and we resolved an open problem of achieving optimal convergence to the game value while maintaining low regrets. Our techniques were based on several simple but novel observations in the game dynamics. Namely, we noticed that the averaged iterates of OMD enjoy logarithmic regret in the honest setting, we provided horizon-independent mixing steps for the OMD to achieve optimal adversarial regret, and we designed a singaling scheme to losslessly bridge OMD and ROMD. In essence, we showed that it is not necessary, as done in the work of \citep{rakhlin2013optimization}, to fix the time horizon beforehand and modify OMD accordingly. Our observations were instrumental in removing $\log T$ terms in all convergence rates.

Our framework suggests several research directions. First, instead of assuming that we observe the full loss vector, we may pose our problem in the \emph{bandit} setting, where only the payoff value of the current strategy is observed. Second, for practical purposes, it is interesting to see whether there exists an \emph{adaptive} step-size version of our algorithm. Finally, generalizing our framework to \emph{multiplayer} games is a challenging future work.

\section*{Acknowledgements}
This project has received funding from the European Research Council (ERC) under the European Union’s Horizon 2020 research and innovation programme (grant agreement n$^o$ 725594 - time-data), and was supported by the Swiss National Science Foundation (SNSF) under grant number 200021\_178865 / 1.

\bibliography{biblio}

\onecolumn
%\section*{Appendix}
%{\fontsize{30}{40}\textbf{Appendix}}
%{\fontsize{30}{40}\textbf{Supplementary material}}
%\centering{\huge\textbf{Appendix}}
\begin{center}
\huge\textbf{Appendix}
\end{center}
\appendix

\numberwithin{equation}{section}
\section{Equivalence Formulations of Optimistic Mirror Descent}\label{app:quiv}
In this appendix, we show that the $\x_t$ iterates in \eqref{eq:omd_equiv} of the main text is equivalent to the following iterates given in \citep{chiang2012online, rakhlin2013online}:
\beq \label{eq:omd_equiv1}
\left\{
\begin{array}{ll}
\x_t &= MD_\eta\l(\tilde{\x}_t, -A\y_{t-1} \r)\\
\tilde{\x}_{t+1} &= MD_\eta \l(\tilde{\x}_t, -A\y_{t} \r)
\end{array}.
\right.
\eeq

By the optimality condition for \eqref{eq:omd_equiv1}, we have
\begin{align}
\nabla \psi(\x_t) &= \nabla \psi (\tilde{\x}_t) - \eta \l( -A\y_{t-1}\r),\label{eq:equiv_hold1}\\
\nabla \psi(\tilde{\x}_t) &= \nabla \psi (\tilde{\x}_{t-1}) - \eta \l( -A\y_{t-1}\r),\label{eq:equiv_hold2}\\
\nabla \psi(\tilde{\x}_{t-1}) &= \nabla \psi (\x_{t-1}) + \eta \l( -A\y_{t-2}\r). \label{eq:equiv_hold3}
\end{align}We hence get \eqref{eq:omd_equiv} by applying \eqref{eq:equiv_hold3} to \eqref{eq:equiv_hold2} and then \eqref{eq:equiv_hold2} to \eqref{eq:equiv_hold1}.
\section{Optimistic Mirror Descent}\label{app:HDMD_proof}
In this appendix, we prove \textbf{Theorem \ref{thm:HDMD}}, restated below for convenience.

\begin{theorem}
Suppose two players of a zero-sum game have played $T$ rounds according to \textbf{Algorithm \ref{alg:HDMDx}} and \textbf{\ref{alg:HDMDy}} with $\eta = \frac{1}{2|A|_{\max}}$. Then
\begin{enumerate}
\item The $\x$-player suffers a $O\l(\frac{\log(T)}{T}\r)$ regret:
\begin{align} %\label{eq:HDMD_low_regret}
\max_{\z\in\Delta_m}\sum_{t=3}^T \ip{\z_t - \z}{ -A\w_t} &\leq   { \Big(\log (T-2) + 1 \Big)\Big(20 + \log m +\log n \Big)|A|_{\max}} \\
&= O\l( {\log T}\r) \nn
\end{align}
and similarly for the $\y$-player.
		
\item The strategies $(\z_T,\w_T)$ constitutes an $O\l(\frac{1}{T} \r)$-approximate equilibrium to the value of the game:
\beq %\label{eq:HDMD_value_of_the_game}
\left| V -   \ip{\z_T}{ A \w_T} \right| \leq \frac{ \Big(20 + \log m +\log n \Big)|A|_{\max}} {T-2} = O\l(  \frac{1}{T}\r).
\eeq
\end{enumerate}
\end{theorem}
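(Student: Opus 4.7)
The plan is to recast both algorithms as standard optimistic mirror descents driven by carefully scaled ``weighted gradients,'' exploit the uniform averaging to extract a crucial telescoping identity, and then translate a sum of the two players' auxiliary regrets into both the duality-gap bound on $(\z_T,\w_T)$ and the $O(\log T)$ regret on the averaged plays.

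First I would introduce the auxiliary gradients $\tilde g_t := -(t-2)A\w_t + (t-3)A\w_{t-1}$ and $\tilde h_t := (t-2)A^\top \z_t - (t-3)A^\top \z_{t-1}$, and verify that the updates in Algorithms \ref{alg:HDMDx} and \ref{alg:HDMDy} are exactly $\x_{t+1}=MD_\eta(\x_t,2\tilde g_t-\tilde g_{t-1})$ and $\y_{t+1}=MD_\eta(\y_t,2\tilde h_t-\tilde h_{t-1})$, i.e.\ plain optimistic MD with $\tilde g_t$ as the predicted gradient. The pivotal identity then comes from the uniform averaging $(t-2)\w_t-(t-3)\w_{t-1}=\y_t$ and its $\z$-counterpart: a direct calculation yields $\tilde g_t-\tilde g_{t-1}=A(\y_{t-1}-\y_t)$ and symmetrically $\tilde h_t-\tilde h_{t-1}=A^\top(\x_t-\x_{t-1})$. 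Although $\tilde g_t$ itself grows linearly in $t$, its increment is $O(1)$ and is tied to the opponent's one-step displacement, which is exactly what the optimistic machinery needs.

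With the identity in hand I would apply the standard optimistic-MD regret lemma to each player against an arbitrary comparator to obtain
\[\sum_{t}\langle \x_t-\x,\tilde g_t\rangle\le \frac{D(\x,\x_3)}{\eta}+\tfrac{\eta}{2}|A|_{\max}^2\sum_{t}\|\y_t-\y_{t-1}\|_1^2-\tfrac{1}{4\eta}\sum_{t}\|\x_{t+1}-\x_t\|_1^2,\]
and symmetrically on the $\y$-side. Adding the two bounds and using the calibration $\eta = 1/(2|A|_{\max})$, the prediction quadratics on one side dominate the stability quadratics on the other, leaving a non-positive remainder and the clean inequality $\sum_{t=3}^T[\langle \x_t-\x,\tilde g_t\rangle+\langle \y_t-\y,\tilde h_t\rangle]\le 2|A|_{\max}(\log m+\log n)$. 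Next, using $(t-2)\w_t-(t-3)\w_{t-1}=\y_t$ inside the weighted inner products and telescoping $\sum_{t=3}^T[(t-2)\w_t-(t-3)\w_{t-1}]=(T-2)\w_T$, the same sum can be rewritten, after cancellation of $\sum_t\langle \x_t,A\y_t\rangle$, as $(T-2)\bigl[\langle \x,A\w_T\rangle-\langle \z_T,A\y\rangle\bigr]$. Taking the supremum over $\x\in\Delta_m,\y\in\Delta_n$ produces $(T-2)\bigl[\max_\x\langle \x,A\w_T\rangle-\min_\y\langle \z_T,A\y\rangle\bigr]\le 2|A|_{\max}(\log m+\log n)$, which sandwiches $V$ and yields statement (2) of the theorem.

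For statement (1), I would observe that the preceding argument never referenced $T$ as the final horizon, so the duality-gap bound holds at every intermediate round $t$. Using $\w_t$ itself as a feasible point for the inner minimum, this gives the instantaneous bound $\langle \z-\z_t,A\w_t\rangle \le C/(t-2)$ for any $\z\in\Delta_m$ with $C = 2|A|_{\max}(\log m+\log n)$; summing over $t=3,\dots,T$ then produces a harmonic series $C\sum_{t=3}^T 1/(t-2) = O(\log T)$, which is the $O(\log T)$ honest regret. The main obstacle will be finding and verifying the identity $\tilde g_t-\tilde g_{t-1}=A(\y_{t-1}-\y_t)$: without it, the $O(t)$ coefficients in the algorithm's update seem to blow the stability term up linearly in $t$, and reverse-engineering the ``right'' auxiliary gradient from the algorithm's $(-2(t-2),3(t-3),-(t-4))$ pattern is the one non-obvious step. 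Once this identity and the calibration $\eta=1/(2|A|_{\max})$ are in place, the cross-player cancellation and the final simplex telescoping are essentially bookkeeping.
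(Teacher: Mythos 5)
Your proposal is correct and follows essentially the same route as the paper's proof: the weighted gradient collapses to $-2A\y_t+A\y_{t-1}$ via the averaging identity, the two players' auxiliary regrets on the $\x_t,\y_t$ sequences are summed so that the cross terms cancel under $\eta=\tfrac{1}{2|A|_{\max}}$, the resulting constant bound is converted into the duality gap of $(\z_T,\w_T)$, and the honest regret follows by applying that gap bound at every intermediate round and summing the harmonic series. The only differences are cosmetic (your constant omits the $O(|A|_{\max})$ boundary terms at $t=3,4,T$ that the paper tracks explicitly).
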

\begin{proof}
Define $\x^*$ as
\begin{align}
\x^* =  \arg\min_{\x \in \Delta_m} \left\langle \x , -A  \l( \frac{1}{T-2}\sum_{t=3}^{T}\y_t \r) \right \rangle.
\end{align}
We define an auxiliary individual regret $\textup{R}_T^\x$ as  
\beq
\textup{R}_T^\x \coloneqq \sum_{t = 3}^{T} \langle \x_t - \x^*, -A\y_t \rangle.
\eeq
Notice that this is the regret on the $\x_t$ sequence versus $\y_t$ sequence, while we are playing $\z_t$'s and $\w_t$'s in the algorithm.

We then have
\begin{align}
\textup{R}_T^\x&= \sum_{t = 3}^{T} \langle \x_t - \x^*, -A\y_t \rangle  \nn\\
&= \langle \x_3 - \x^*,-A \y_3\rangle + \sum_{t=4}^{T} \langle \x_t - \x^* , -A\y_t\rangle \nn\\ 
&\leq 2|A|_{\max} + \sum_{t=4}^{T}\langle \x_t - \x^*, -A\y_t - \g_{t-1}\rangle + \sum_{t=4}^{T}\langle \x_t - \x^*, \g_{t-1}\rangle \nn
\end{align}
where $\g_t \coloneqq -2(t-2)A \w_t+3(t-3)A\w_{t-1} - (t-4)A\w_{t-2}$. Inserting $\w_t = \frac{1}{t-2}\sum_{i = 3}^{t}\y_i$ into the definition of $\g_t$, we get $\g_t = -2A\y_{t} + A\y_{t-1}$. Straightforward calculation then shows:
\begin{align} 
\textup{R}_T^\x&\leq 2|A|_{\max} + \sum_{t=4}^{T}\langle \x_t - \x^*, -A\y_t + 2A\y_{t-1} - A\y_{t-2}\rangle + \sum_{t=4}^{T}\langle \x_t - \x^*, -2A\y_{t-1} + A\y_{t-2}\rangle \nn\\
&=  2|A|_{\max} + \sum_{t=4}^{T}\langle \x_t - \x^*, (-A\y_t + A\y_{t-1}) - (-A\y_{t-1} + A\y_{t-2})\rangle \nn\\
&\qquad \quad \qquad \quad \quad \ \ \ \ \ \ \ \ \ \ \ \ \ \ \ \ \ \ \ \ + \frac{1}{\eta}\sum_{t=4}^{T} \Big(D(\x^*,\x_{t-1}) - D(\x^*,\x_{t}) - D(\x_t,\x_{t-1}) \Big) \nn \\
&= 2|A|_{\max} + \sum_{t=4}^{T-1}\langle \x_t - \x_{t+1}, -A\y_{t} + A\y_{t-1}\rangle + \langle \x_4 - \x^*, A\y_3 - A\y_2\rangle \nn \\
 &\quad  \ \ \ \ + \langle \x_T - \x^* , -A\y_T + A\y_{T-1}\rangle + \frac{1}{\eta}\sum_{t=4}^{T} \Big(D(\x^*,\x_{t-1}) - D(\x^*,\x_{t}) - D(\x_t,\x_{t-1}) \Big) \nn \\
&\leq 10|A|_{\max}+ \sum_{t=4}^{T-1}\langle \x_t - \x_{t+1}, -A\y_{t} + A\y_{t-1}\rangle \nn \\ 
& \hspace{150pt} + \frac{1}{\eta}\sum_{t=4}^{T} \Big(D(\x^*,\x_{t-1}) - D(\x^*,\x_{t}) - D(\x_t,\x_{t-1}) \Big) \nn  \\
&\leq 10|A|_{\max} + \sum_{t=4}^{T-1}\|\x_t - x_{t+1}\|_1 \cdot |A|_{\max} \cdot\| \y_{t} - \y_{t-1}\|_1 \nn \\
&\hspace{150pt} + \frac{1}{\eta}\Big(D(\x^*,\x_3) - D(\x^*,\x_T) \Big) +\sum_{t=4}^{T}\frac{-1}{\eta} D(\x_t, \x_{t-1})  \nn \\
&\leq 10|A|_{\max} + \frac{1}{2}\sum_{t=4}^{T-1} \Big( |A|_{\max} \cdot \|\x_t - \x_{t+1}\|_1^2 + |A|_{\max} \cdot \|\y_{t} - \y_{t-1} \|_1^2 \Big) \nn\\
&\hspace{150pt} + \frac{1}{\eta}\Big(D(\x^*, \x_3) - D(\x^*,\x_T) \Big) +\sum_{t=4}^{T}\frac{-1}{\eta} D(\x_t, \x_{t-1}). \nn %\label{eq:proof_HDMD_1}  
\end{align} 
Using the fact that $\psi$ is 1-strongly convex with respect to the $\ell_1$-norm, we have $ -D(\x,\x') \leq -\frac{1}{2}\|\x-\x' \|_1^2 \leq 0$. Also, we have $D(\x^*,\x_3) \leq \log m$. Combining these facts in the last inequality gives:
\begin{align*}  
\textup{R}_T^\x&\leq 10|A|_{\max} + \frac{\log m}{\eta} + \frac{|A|_{\max}}{2}\sum_{t=4}^{T-1}   \|\x_t - \x_{t+1}\|_1^2 \nn \\
&\hspace{70pt}  + \frac{|A|_{\max}}{2}\sum_{t=4}^{T-1}   \|\y_t - \y_{t-1}\|_1^2 - \frac{1}{2\eta} \sum_{t=4}^{T} \|\x_{t-1}-\x_t \|_1^2.
\end{align*} 
Similarly, for the second player we define 
\beq
\textup{R}^\y_T \coloneqq \sum_{t=3}^T \ip{\y_t - \y^*}{ A^\top\x_t}
\eeq
where $\y^* \coloneqq \arg\min_\y \ip{\y}{ A^\top \l( \frac{1}{T-2}\sum_{t=3}^T \x_t \r)}$. We then have
\begin{align*} 
\textup{R}^\y_T &\leq 10|A|_{\max} + \frac{\log n}{\eta} + \frac{|A|_{\max}}{2}\sum_{t=4}^{T-1}   \|\y_t - \y_{t+1}\|_1^2 \nn \\
&\hspace{70pt}  + \frac{|A|_{\max}}{2}\sum_{t=4}^{T-1}   \|\x_t - \x_{t-1}\|_1^2 - \frac{1}{2\eta} \sum_{t=4}^{T} \|\y_{t-1}-\y_t \|_1^2.
\end{align*}
Setting $\eta = \frac{1}{2|A|_{\max}}$, we get
\beq \label{eq:HDMD_proof_hold1}
\textup{R}_T^\x+ \textup{R}^\y_T \leq   \Big(20 + \log m +\log n \Big){|A|_{\max}}.
\eeq
Now, recalling that $\z_T = \frac{\sum_{t=3}^T \x_t}{T-2}$ and $\w_T = \frac{\sum_{t=3}^T \y_t}{T-2}$ and using the definition of $\textup{R}_T^\x$ and $\textup{R}^\y_T$, we get
\begin{align} \label{eq:HDMD_proof_hold2}
\frac{1}{T-2}  \Big( \textup{R}_T^\x+ \textup{R}^\y_T  \Big) = \max_{\x \in \Delta_m} \ip{\x}{ A\w_T} - \min_{\y\in\Delta_n} \ip{\z_T}{A\y}.
\end{align}
Furthermore, by the definition of the value of the game,
we have
\beq \label{eq:HDMD_proof_hold3}
\min_{\y \in \Delta_n} \ip{\z_T}{A\y} \leq V \leq \max_{\x \in \Delta_m} \ip{\x}{A\w_T}.
\eeq
We also trivially have 
\beq \label{eq:HDMD_proof_hold4}
\min_{\y \in \Delta_n} \ip{\z_T}{A\y} \leq \ip{\z_T}{A\w_T} \leq \max_{\x \in \Delta_m} \ip{\x}{A\w_T}.
\eeq
Combining \eqref{eq:HDMD_proof_hold2} - \eqref{eq:HDMD_proof_hold4} in \eqref{eq:HDMD_proof_hold1} then establishes \eqref{eq:HDMD_value_of_the_game}:
\beq \nn
\left |V - \ip{\z_T}{A\w_T} \right | \leq  \frac{ \Big(20 + \log m +\log n \Big)|A|_{\max}} {T-2}.
\eeq

We now turn to \eqref{eq:HDMD_low_regret}. 

Let $\textup{R}_T^\z \coloneqq  \max_{\z\in \Delta_m}\sum_{t=3}^T \ip{\z_t - \z}{ -A\w_t}$ and let $\tilde{\textup{R}}_T^\z \coloneqq  \sum_{t=3}^T \ip{\z_t - \z_t^*}{ -A\w_t}$ where $\z^*_t = \arg\min_{\z\in\Delta_m} \ip{\z}{-A\w_t}$. Evidently we have $\textup{R}_T^\z \leq \tilde{\textup{R}}_T^\z$. Notice that (with $\w_t^*$ similarly defined)
\begin{align}
\ip{\z_t - \z_t^*}{ -A\w_t} &= \ip{\z^*_t}{ A\w_t} - \ip{\z_t}{ A\w_t} \nn \\
&\leq \ip{\z^*_t}{ A\w_t} - \ip{\z_t}{ A\w^*_t} \nn\\
&\leq \frac{\Big(20 + \log m +\log n \Big){|A|_{\max}}}{t-2} \label{eq:HDMD_proof_hold5}
\end{align}by \eqref{eq:HDMD_proof_hold1} and \eqref{eq:HDMD_proof_hold2}. Using these inequalities, we get
\begin{align*}
\frac{1}{T-2} \textup{R}_T^\z  \leq \frac{1}{T-2} \tilde{\textup{R}}_T^\z  &= \frac{1}{T-2}\sum_{t=3}^T \ip{\z_t - \z^*_t}{ -A\w_t} \\
&\leq  \frac{1}{T-2} \sum_{t=3}^T \frac{ \Big(20 + \log m +\log n \Big)|A|_{\max}} {t-2} \\
&\leq \frac{ \Big(\log (T-2) + 1 \Big)\Big(20 + \log m +\log n \Big)|A|_{\max}} {T-2} 
\end{align*}which finishes the proof.
\end{proof}
\section{Robust Optimistic Mirror Descent}\label{app:RDMD_proof}
In this appendix, we prove \textbf{Theorem \ref{thm:RDMD}}, repeated below for convenience.

\begin{theorem}[$O(\sqrt{T})$-Adversarial Regret]
Suppose that $\| \nabla f_t \|_* \leq G$ for all $t$. Then playing $T$ rounds of \textbf{Algorithm \ref{alg:RDMD}} with $\eta_t = \frac{1}{G\sqrt{t}}$ against an arbitrary sequence of convex functions has the following guarantee on the regret:
\begin{align}
\max_{\x\in\Delta_m}\sum_{t=1}^T \ip{\x_t - \x}{ \nabla f_t (\x_t)} &\leq   G\sqrt{T}\l( 18+2D^2\r)  + GD\l(3\sqrt{2} + 4D \r) \nn\\
%&\hspace{40pt}\nn \\
& = O\l( {\sqrt{T} }\r).  \nn
\end{align}
\end{theorem}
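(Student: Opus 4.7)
The proof follows the optimistic mirror descent template, adapted to the mixing step $\tilde{\x}_t = \tfrac{t-1}{t}\x_t + \tfrac{1}{t}\x_c$ and the non-adaptive step-size $\eta_t = 1/(G\sqrt{t})$. Writing $\ell_t \coloneqq \nabla f_t(\x_t)$, the plan is to exploit the identity $\tilde{\nabla}_t - \ell_t = \ell_t - \ell_{t-1}$ and decompose
\begin{equation*}
\ip{\x_t - \x}{\ell_t} \;=\; \ip{\x_t - \x}{\tilde{\nabla}_t} \;+\; \ip{\x_t - \x}{\ell_{t-1} - \ell_t},
\end{equation*}
then to sum the two pieces in a way that makes a common $D(\x_{t+1}, \tilde{\x}_t)$ contribution cancel between them.

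For the first sum, I would further split $\ip{\x_t - \x}{\tilde{\nabla}_t}$ into the three pieces $\ip{\x_t - \tilde{\x}_t}{\tilde{\nabla}_t}$, $\ip{\tilde{\x}_t - \x_{t+1}}{\tilde{\nabla}_t}$, and $\ip{\x_{t+1} - \x}{\tilde{\nabla}_t}$. The mixing identity $\x_t - \tilde{\x}_t = \tfrac{1}{t}(\x_t - \x_c)$ and the bound $\|\tilde{\nabla}_t\|_* \leq 3G$ make the first piece $O(GD/t)$. The third piece is exactly what the three-point identity for $\x_{t+1} = MD_{\eta_t}(\tilde{\x}_t, \tilde{\nabla}_t)$ handles, producing the telescopic $\tfrac{1}{\eta_t}[D(\x, \tilde{\x}_t) - D(\x, \x_{t+1})]$ together with a negative $-\tfrac{1}{\eta_t}D(\x_{t+1}, \tilde{\x}_t)$. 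The middle piece, bounded by Cauchy--Schwarz and then $1$-strong convexity of $\psi$ through Young's inequality, contributes $9G^2\eta_t + \tfrac{1}{2\eta_t}D(\x_{t+1}, \tilde{\x}_t)$ and thus cancels exactly half of that negative term.

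For the second sum I would use Abel summation (with the convention $\ell_0 = 0$) to convert it into $\sum_s \ip{\x_{s+1} - \x_s}{\ell_s}$ plus a boundary term of size $O(GD)$. Splitting $\|\x_{s+1} - \x_s\| \leq \|\x_{s+1} - \tilde{\x}_s\| + \|\tilde{\x}_s - \x_s\|$ and using strong convexity and the mixing identity bounds each increment by $\sqrt{2 D(\x_{s+1}, \tilde{\x}_s)} + O(D/s)$. A final Young's inequality produces $G^2\eta_s + \tfrac{1}{2\eta_s}D(\x_{s+1}, \tilde{\x}_s)$, and this $D(\x_{s+1}, \tilde{\x}_s)$ contribution pairs exactly with the remaining negative half inherited from the previous paragraph, so that all Bregman increments $D(\x_{s+1}, \tilde{\x}_s)$ vanish in the final sum.

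The main obstacle, in my view, is telescoping $\sum_t \tfrac{1}{\eta_t}[D(\x, \tilde{\x}_t) - D(\x, \x_{t+1})]$ against a step-size that shrinks but is horizon-independent; here the mixing step earns its keep. Convexity of $\psi$ makes $D(\x, \cdot)$ convex, so Jensen yields $D(\x, \tilde{\x}_t) \leq \tfrac{t-1}{t}D(\x, \x_t) + \tfrac{D^2}{t}$, and the numerical identity $\tfrac{t-1}{t}G\sqrt{t} \leq G\sqrt{t-1} = \tfrac{1}{\eta_{t-1}}$ lets the $D(\x, \x_t)$ pieces telescope, leaving only a residual $\sum_t \tfrac{D^2 G}{\sqrt{t}} = O(D^2 G\sqrt{T})$. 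Combining this with $\sum_t G^2 \eta_t = O(G\sqrt{T})$ and the $O(GD\log T) = O(GD\sqrt{T})$ mixing offsets yields the announced $O(\sqrt{T})$ regret. Without the mixing one could not telescope the Bregman sum against a $1/\sqrt{t}$ step-size without fixing the horizon in advance, so the mixing is doing real work beyond cosmetics.
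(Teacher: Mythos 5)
Your proof is correct and rests on the same central mechanism as the paper's: the mixing step $\tilde{\x}_t = \frac{t-1}{t}\x_t + \frac{1}{t}\x_c$ is precisely what lets $\sum_t \frac{1}{\eta_t}\bigl(D(\x,\tilde{\x}_t) - D(\x,\x_{t+1})\bigr)$ telescope against the horizon-free step-size, via $D(\x,\tilde{\x}_t)\le\frac{t-1}{t}D(\x,\x_t)+\frac{1}{t}D(\x,\x_c)$ together with $\frac{t-1}{\sqrt{t}}\le\sqrt{t-1}$. The bookkeeping around that mechanism, however, is genuinely different. You pair $\x_t$ with $\tilde{\nabla}_t$ (the gradient producing $\x_{t+1}$), so your correction term is the first difference $\ell_{t-1}-\ell_t$, which after Abel summation becomes $\sum_s\ip{\x_{s+1}-\x_s}{\ell_s}$ and is controlled by the triangle inequality through $\tilde{\x}_s$ plus Young's inequality applied directly to $D(\x_{s+1},\tilde{\x}_s)$; the negative Bregman terms from the three-point identity then cancel exactly, half against the prediction-error piece and half against the Abel piece. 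The paper instead pairs $\x_t$ with $\tilde{\nabla}_{t-1}$, so the correction is a second difference of gradients, Abel-summed to $\ip{\x_t-\x_{t+1}}{\nabla f_t(\x_t)-\nabla f_{t-1}(\x_{t-1})}$ and bounded by H\"older--Young against $\sqrt{t}\,\|\x_t-\x_{t+1}\|^2$; one must then convert $-D(\x_{t+1},\tilde{\x}_t)$ into $-\|\x_t-\x_{t+1}\|^2$ through a parallelogram-type inequality (paying a summable $D^2 t^{-3/2}$ mixing cost) and cancel the two quadratic sums by tuning $\alpha=G$ and using $\bigl(\frac{t-1}{t}\bigr)^2\ge\frac{4}{9}$. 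Your route avoids that last manipulation and is arguably cleaner; the price is an extra $O(GD\log T)$ from the $\frac{1}{t}(\x_t-\x_c)$ offsets, which changes the constants (you will not recover the stated bound $G\sqrt{T}(18+2D^2)+GD(3\sqrt{2}+4D)$ exactly) but not the $O(\sqrt{T})$ rate. One small correction: the inequality $D(\x,\tilde{\x}_t)\le\frac{t-1}{t}D(\x,\x_t)+\frac{1}{t}D(\x,\x_c)$ does not follow from convexity of $\psi$ alone, since $D(\x,\cdot)$ need not be convex in its second argument for a general mirror map; it holds here because the entropic Bregman divergence (KL) is jointly convex, which is also the property the paper invokes.
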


\def\myconst{{ 9 }}
\begin{proof}
Define $\textup{R}^\x_T \coloneqq \sum_{t=1}^T \ip{\x_t - \x^*}{ \nabla f_t(\x_t)} $ where $\x^* \coloneqq \arg\min_{\x\in\Delta_m} \ip{\x}{\sum_{t=1}^T \nabla f_t(\x_t)}$. Let $\tilde{\nabla }_t = 2\nabla f_t(\x_t)-\nabla f_{t-1}(\x_{t-1})$, and let $\eta_t = \frac{1}{\alpha \sqrt{t}}$ for some $\alpha >0$ to be chosen later. Then
\begin{align*}
	\textup{R}^\x_T &= \sum_{t=1}^{T} \langle \x_t - \x^*, \nabla f_t(\x_t)\rangle \\
	&\leq \sqrt{2}DG + \sum_{t=2}^{T} \langle \x_t - \x^*, \nabla f_t(\x_t) - \tilde{\nabla}_{t-1}\rangle +\sum_{t=2}^{T} \langle \x_t - \x^*, \tilde{\nabla}_{t-1}\rangle  \\
	&\leq \sqrt{2}DG + \sum_{t=2}^{T} \langle \x_t - \x^*, \nabla f_t (\x_t)- \nabla f_{t-1}(\x_{t-1})\rangle - \sum_{t=2}^{T} \langle \x_t - \x^*, \nabla f_{t-1}(\x_{t-1}) - \nabla f_{t-2}(\x_{t-2})\rangle +\sum_{t=2}^{T} \langle \x_t - \x^*, \tilde{\nabla}_{t-1}\rangle  \\
	&\leq 3\sqrt{2}DG +	\sum_{t=2}^{T-1} \langle \x_t - \x_{t+1},  \nabla f_t(\x_t) - \nabla f_{t-1}(\x_{t-1})\rangle + \sum_{t=2}^{T}\frac{1}{\eta_t}\Big(D(\x^*,\tilde{\x}_{t-1})-D(\x^*, \x_{t}) - D(\x_{t},\tilde{\x}_{t-1}) \Big) \\
	&\leq 3\sqrt{2}DG  + \sum_{t = 2}^{T-1} \l(\frac{\sqrt{t}G}{9}\|\x_t - \x_{t+1}\|^2 + \frac{9 G}{\sqrt{t}} \r)  \\ 
& \hspace{100pt} + \alpha\sum_{t=1}^{T}\sqrt{t}\Big(D(\x^*,\tilde{\x}_{t-1})-D(\x^*, \x_{t}) - D(\x_{t},\tilde{\x}_{t-1}) \Big) .
	\end{align*}
Using the joint convexity of $D(\x,\y)$ in $\x$ and $\y$ and the strong convexity of the entropic mirror map, we get:
\begin{align}
-D(\x_{t}, \tilde{\x}_{t-1}) &\leq - \frac{1}{2} \| \tilde{\x}_t -\x_{t+1} \|^2 \nn\\
&\leq -\frac{1}{4} \left\|\frac{t-1}{t} (\x_t - \x_{t+1}) \right\|^2 +\frac{1}{2} \l(\frac{1}{t}\r)^2 \left\|\x_c - \x_{t+1} \right\|^2 \nn\\
&\leq -\frac{(t-1)^2}{4 t^2}\| \x_t - \x_{t+1}\|^2 + \frac{D^2}{t^2} , \nn
\end{align}
and
\begin{align}
&D(\x^*,\tilde{\x}_t) \leq \frac{t-1}{t}D(\x^*,\x_t)+ \frac{1}{t}D \l(\x^*,\x_c \r).\nn
\end{align}
Meanwhile, straightforward calculations show that
\begin{align}
&\sum_{t=2}^T \frac{D \l(\x^*, \x_c \r)}{\sqrt{t}}  \leq 2D^2\sqrt{T},\nn
\end{align}
and
\begin{align}
\sum_{t=2}^T \l(\sqrt{t} \cdot \frac{t-1}{t} D(\x^*, \x_{t-1}) - \sqrt{t} D(\x^*, \x_t)\r) &\leq \sum_{t=2}^T \l(\sqrt{t-1} D(\x^*, \x_{t-1}) - \sqrt{t} D(\x^*, \x_t)\r) \nn\\
&\leq D(\x^*, \x_1) \leq D^2.\nn
\end{align}
We can hence continue as
	\begin{align}
	\textup{R}^\x_T 	&\leq 
	3\sqrt{2}DG + \sum_{t = 2}^{T-1} \l(\frac{\sqrt{t}}{9}G\|\x_t - \x_{t+1}\|^2 + \frac{ 9G}{\sqrt{t}} \r)  + 2\alpha D^2\sqrt{T}\nn \\ 
	& +\alpha D^2 - \frac{ \alpha }{4}\sum_{t=2}^{T}\sqrt{t} \cdot \l(\frac{t-1}{t}\r)^2 \|\x_{t-1} - \x_{t}\|^2 + \alpha D^2\sum_{t=2}^{T}\frac{\sqrt{t}}{t^2}.  \label{eq:RDMD_proof_hold1} 
\end{align}
Elementary calculations further show
\begin{align}
&\sum_{t = 2}^{T-1}\frac{9G}{\sqrt{t}} \leq 18G\sqrt{T} \nn, \\
& \sum_{t=2}^T \frac{1}{t\sqrt{t}} \leq 3. \nn
\end{align}
Finally, since $(\frac{t-1}{t})^2 \geq \frac{4}{9}$ for $t\geq 3$, we can further bound \eqref{eq:RDMD_proof_hold1} as
\begin{align*}
\textup{R}^\x_t 	&\leq 3\sqrt{2}DG + 18G \sqrt{T} + 2\alpha D^2\sqrt{T} + 4\alpha D^2   \\
& \hspace{80pt} + \l(  \frac{G}{9}\sum_{t=2}^{T-1}  \sqrt{t} \| \x_t - \x_{t+1} \|^2 - \frac{\alpha}{4} \cdot \frac{4}{9}\sum_{t=2}^{T-1} \sqrt{t+1} \| \x_t - \x_{t+1}\|^2 \r).
%&\leq |A|_{\max} \l( 58 + (8+18\log m) \sqrt{T} + 9\log m \r)
\end{align*}The proof is finished by choosing $\alpha = G.$
\end{proof}

\bibliographystyle{icml2018}

\end{document}